\theoremstyle{definition}
\newtheorem{theorem}{Theorem}
\newtheorem{corollary}[theorem]{Corollary}
\newtheorem{conjecture}[theorem]{Conjecture}
\newtheorem{algorithm}[theorem]{Algorithm}
\newtheorem{example}[theorem]{Example}
\newtheorem{definition}[theorem]{Definition}
\newtheorem{remark}[theorem]{Remark}
\def\x{{\mathbf x}}
\def\y{{\mathbf y}}
\def\0{{\mathbf 0}}
\def\1{{\mathbf 1}}
\def\r{{\mathbf r}}
\def\s{{\mathbf s}}
\def\F{{\mathbb F}}
\def\Z{{\mathbb Z}}
\def\R{{\mathcal R}}
\def\S{{\mathcal S}}
\def\I{{\mathbf I}}
\def\CC{{\mathcal C}}
\def\HH{{\mathcal H}}
\newenvironment{smatrix2}{\left(\begin{smallmatrix}}{\end{smallmatrix}\right)}
\newcommand{\dso}{d_{\mathrm{so}}}
\newenvironment{smatrix}[1]
{\arraycolsep=1pt\footnotesize
	\array{#1}}
{\endarray}
\begin{document}

\title{Extending binary linear codes to self-orthogonal codes}

\author{
	Jon-Lark Kim\\
	Department of Mathematics\\
	Sogang University\\
	Seoul 04107, Korea\\
	e-mail: jlkim@sogang.ac.kr\\
	\\
Whan-Hyuk Choi\\
Department of Biomedical Engineering\\
UNIST\\
Ulsan 44919, Korea\\
e-mail: choiwh@unist.ac.kr\\
}


\maketitle

\begin{abstract}

Kim et al. (2021) gave a method to embed a given binary $[n,k]$ code $\mathcal{C}$ $(k = 3, 4)$ into a self-orthogonal code of the shortest length which has the same dimension $k$ and minimum distance $d' \ge d(\mathcal{C})$. We extend this result by proposing a new method related to a special matrix, called  the self-orthogonality matrix $SO_k$, obtained by shortening a Reed-Muller code $\mathcal R(2,k)$. Using this approach, we can extend binary linear codes to many optimal self-orthogonal codes of dimensions $5$ and $6$. Furthermore, we partially disprove the conjecture (Kim et al. (2021)) by showing that if $31 \le n \le 256$ and $n\equiv 14,22,29 \pmod{31}$, then there exist optimal $[n,5]$ codes which are self-orthogonal. We also construct optimal self-orthogonal $[n,6]$ codes when $41 \le n \le 256$ satisfies $n \ne 46, 54, 61$  and $n \nequiv 7, 14, 22, 29, 38, 45, 53, 60 \pmod{63}$.\\

\end{abstract}

{\bf Keywords:}  binary linear code,  optimal self-orthogonal code, Reed-Muller code, quantum code

{\bf MSC:} Primary: 94B05, Secondary: 11T71

\maketitle

\section{Introduction}

Since the beginning of the coding theory, many researchers have studied self-orthogonal (abbr. SO) codes and their applications.
It is well-known that they have connections to $t$-designs\cite{Bachoc2004} and lattices\cite{Harada2009}. Self-orthogonal codes also have connections to quantum codes\cite{Kim2002, Jin2010, Jin2012,Lisoneks2014}, which are currently receiving much attention due to quantum computers. 

One of the main topics in coding theory is to find the minimum distance optimal code among self-dual or self-orthogonal codes \cite{Niu2019}.
Boukllieve et al.\cite{Boukllieve2006} investigated optimal $[n,k]$ SO codes of lengths for $n \le 40$ and $k \le 10$. There are some optimal linear codes in BKLC (best-known linear codes) database of MAGMA\cite{Magma1994} which are also self-orthogonal. However, the highest minimum weight of some optimal $[n,k]$ SO codes are still unknown for some parameters $n$ and $k$.

Kim et al.\cite{Kim2021} gave a novel algorithm for the construction of optimal SO codes by adding columns to the generator matrix of a linear code $\CC$ of dimension $k \le 4$ and minimum distance $d' \ge d(\CC)$. They investigated the characterization of self-orthogonality for a given binary linear code in terms of the number of column vectors in its generator matrix. However, the algorithm in \cite{Kim2021} was not suitable to construct optimal SO codes of dimensions greater than or equal to 5.

In this paper, we extend this result for $k=5$ and $6$ by proposing a new method related to a special matrix, called  the self-orthogonality matrix $SO_k$, obtained by shortening a Reed-Muller code $\R(2,k)$.  Furthermore, we partially disprove the conjecture 25 in \cite{Kim2021} by showing that if $31 \le n \le 256$ and $n\equiv 14,22,29 \pmod{31}$, then there exist optimal $[n,5]$ codes that are self-orthogonal. We also show that if $n \le 256$ satisfies $n \ne 46, 54, 61, 86$  and $n \nequiv 7, 14, 22, 29, 38, 45, 53, 60 \pmod{63}$, then there exist optimal $[n,6]$ codes which are self-orthogonal. 

The paper consists of 5 sections. Section 2 gives preliminaries. Section 3 defines the self-orthogonality matrix $SO_k$ and describes our two main theorems, Theorems \ref{thm-SO-RM} and \ref{cor-SO embedding}. In Section 4, we propose a shortest SO embedding algorithm, partially disprove the conjecture 25 in \cite{Kim2021}, and construct optimal SO $[n,5]$ and $[n,6]$ codes. We also give an example of quantum codes based on self-orthogonal codes(see Corollary \ref{quantum_cor} and Example \ref{quantumcodeex}). We give a conclusion in Section 5.

\section{Preliminaries}

Let $\mathbb{F}$ be the finite field of order $2$.
A subspace $\CC$ of $\mathbb{F}^n$ is called a {\it linear code} of length $n$.
For $n,k \in \Z^+$, a $k$-dimensional linear code $\CC \subset \mathbb{F}^n$ is called an {\it $[n,k]$ code}.
The elements of $\CC$ are called {\it codewords}. A {\it generator matrix} for $\CC$ is a $k\times n$ matrix $G$ whose rows form a basis for $\CC$.

For $\x = (x_1,x_2,\ldots,x_n),\y = (y_1,y_2,\ldots,y_n) \in \mathbb{F}^n$, let $\x \cdot \y := \sum_{i=1}^n x_iy_i$.
For a linear code $\CC$, the code
$$\CC^{\perp} := \left\{ \x \in \mathbb{F}^n \; \middle| \; \x \cdot \y =0\: {\mbox{for  all }} \y \in \CC\right\}$$ is called the {\it dual} of $\CC$.
A linear code $\CC$ satisfying $\CC \subseteq \CC^\perp$ (resp. $\CC = \CC^\perp$) is called {\it self-orthogonal} (resp. {\it self-dual}).

For $\x,\y \in \mathbb{F}^n$, we define the {\it (Hamming) distance} $d(\x,\y)$ between $\x$ and $\y$ by the number of coordinates in which $\x$ and $\y$ differ. The {\it minimum distance} of $\CC$ is the smallest distance between any two distinct codewords.
For $n,k,d \in \Z^+$, an {\it $[n,k,d]$ code} $\CC$ is an $[n,k]$ code whose minimum distance is $d$.
A linear $[n,k]$ code $\CC$ is called $\it optimal$ if its minimum distance $d$ is the highest among all $[n,k]$ codes. For many values of $n$ and $k$, an optimal linear $[n, k]$ code is not self-orthogonal. So we say that a self-orthogonal $[n,k]$ code with the highest minimum weight among all self-orthogonal $[n, k]$ codes is an optimal SO code. We denote by $d(n,k)$ and $d_{\mathrm{so}}(n,k)$ the minimum distance of an optimal $[n,k]$ code and optimal $[n,k]$ SO code, respectively.

We point out that $d_{\mathrm{so}}(n,k)$ is always even because of self-orthogonality and that the best possible minimum distance of a self-orthogonal code is $2\lfloor d(n,k)/2 \rfloor$. In other words, if there exists a self-orthogonal code $\CC$ with the minimum distance $2\lfloor d(n,k)/2 \rfloor$, then $\CC$ is an optimal SO code and therefore, $d_{\mathrm{so}}(n,k)=2\lfloor d(n,k)/2 \rfloor.$

Let us collect some required notations. For any $[n,k]$ code $\CC$ generated by $G$, we denote by $\mathnormal{r}_i(G)$ the $i$th row of $G$ from the top for $1\le i \le k$  and $\mathnormal{c}_j(G)$ the $j$th column of $G$ from the left for $1 \le j \le n$. If there is no danger of confusion to the matrix $G$, then we will write $\mathnormal{r}_i$ (resp. $\mathnormal{c}_j$) for $\mathnormal{r}_i(G)$ (resp. $\mathnormal{c}_j(G)$). We denote by $H_k$ the generator matrix of the binary simplex code $\S_k$.
For example, $\S_3$ is the [7,3] linear code generated by $$H_3 =   \left(\begin{smatrix}{ccccccc}
0	& 0  &0&1&1&1&1\\
0	& 1 &1&0&0&1&1\\
1	& 0 &1&0&1&0&1
\end{smatrix}\right).$$

For $i=1,2, \cdots, 2^k-1$, we let
\begin{center} $\mathsf{h}_i :=$ the $i$th column vector of $H_k$, 
\end{center}
and for a $k\times n$ matrix $G$ over $\F$, we define
\begin{center} $\ell_{\mathsf{h}_i }(G) :=$ the number columns of $G$ which is equal to $\mathsf{h}_i $. \end{center}
If there is no confusion we will simply write $\ell_i$ for $\ell_{\mathsf{h}_i}$.
We also define a vector $\ell(G)$ over $\F$ as
\begin{center}  $\ell(G):=(\ell_1, \ell_2,\cdots ,\ell_{2^k-1}) \pmod 2.$\end{center}

\begin{example}\label{ex8_3}
	Let $\CC_{8,3}$ be a $[8,3,3]$ code generated by 	
	$$G_{8,3} =\left(\begin{smatrix}{cccccccc}
	1&1&0&1&1&1&1&0\\
	0&0&1&1&1&0&0&1\\
	0&0&0&0&0&1&1&1
	\end{smatrix}\right).$$
	Then $c_1=\mathsf{h}_4, c_2=\mathsf{h}_4,c_3=\mathsf{h}_2,c_4=\mathsf{h}_6,c_5=\mathsf{h}_6,c_6=\mathsf{h}_5, c_7=\mathsf{h}_5,c_8=\mathsf{h}_3$, thus $$(\ell_1, \ell_2,\ell_{3},\ell_{4},\ell_{5},\ell_{6},\ell_{7})=( 0, 1, 1, 2, 2, 2, 0),$$
	and we obtain the binary vector
	$$\ell(G_{8,3})=( 0, 1,1, 0, 0, 0, 0).$$
\end{example}

\section{Binary self-orthogonality Matrix}

\begin{definition}
	For the matrix $H_k$, let $\r_i$ be the $i$th row vector of $H_k$. Following the notation of \cite[Thm 1.4.3.(i)]{HP2010}, let $\r_i \cap \r_j$ be the vector in $\F^n$, which has 1s precisely in those positions where both $\r_i$ and $\r_j$ have 1s. Let $R(H_k)$ be a set of vectors as $$R(H_k)=\{ \r_i \cap \r_j \mid 1\le i \le j \le k \}.$$ Then, we define {\it the self-orthogonality matrix} $SO_k$ as a matrix with all vectors in $R(H_k)$ as rows. Since $R(H_k)$ is a set of $\frac{k(k+1)}{2}$ vectors of length $2^k-1$, the size of $SO_k$ is $\frac{k(k+1)}{2} \times (2^k-1)$.
\end{definition}

We note that the vector $\r_i \cap \r_i$ is equal to $\r_i$. 
Therefore, all the rows of $H_k$ are also rows of the matrix $SO_k$, thus we regard $SO_k$ as a vertically concatenated matrix of $H_k$ and the matrix consisting of the row vectors $\r_i \cap \r_j$ for $1\le i < j \le k.$

\begin{example}
	Since $H_2=\left(\begin{smatrix}{ccc}
	0&1&1\\
	1&0&1
	\end{smatrix}\right)$, the set $R(H_2)$ has three vectors $\r_1 \cap \r_1$, $\r_2 \cap \r_2$, and $\r_1 \cap \r_2$.
	Thus, 
	$$SO_2=\left(\begin{smatrix}{c}
	\r_1 \cap \r_1 \\
	\r_2 \cap \r_2\\ \hline
	\r_1 \cap \r_2
	\end{smatrix}\right)=\left(\begin{smatrix}{ccc}
	0&1&1\\
	1&0&1\\ \hline
	0&0&1
	\end{smatrix}\right)$$

\end{example}

\begin{example}
	Since $H_3 =   \left(\begin{smatrix}{ccccccc}
	0	& 0  &0&1&1&1&1\\
	0	& 1 &1&0&0&1&1\\
	1	& 0 &1&0&1&0&1
	\end{smatrix}\right)$, the set $R(H_3)$ has 6 vectors and
	$$SO_3 =\left(\begin{smatrix}{c}
	\r_1 \cap \r_1\\
	\r_2 \cap \r_2\\
	\r_3 \cap \r_3\\ \hline
	\r_1 \cap \r_2\\
	\r_1 \cap \r_3\\
	\r_2 \cap \r_3
	\end{smatrix}\right)=\left(\begin{smatrix}{ccccccc}
	0	& 0  &0&1&1&1&1\\
	0	& 1 &1&0&0&1&1\\
	1	& 0 &1&0&1&0&1\\ \hline
	0& 0 &0&0&0&1&1	\\
	0& 0 &0&0&1&0&1\\
	0& 0 &1&0&0&0&1
	\end{smatrix}\right).$$
\end{example}

\begin{definition}[\cite{Kim2021}]
	Let $\CC$ be an $[n,k]$ code generated by $G$.
	
	\begin{enumerate}[label = (\arabic*)]
		\item An \emph{SO embedding} of $\CC$ is an SO code whose generator matrix $\tilde{G}$ is obtained by adding a set $S$ of column vectors to $G$, i.e.,
		$\tilde{G} := [G~ | ~S]$.
		\item An SO embedding of $\CC$ is called a \emph{shortest SO embedding} of $\CC$ if its length is shortest among all SO embeddings of $\CC$.
	\end{enumerate}
	
\end{definition}

Kim et al.~\cite[p. 3705]{Kim2021} remarked the following based on their complicated algorithms and left the other cases open, which will be solved in Theorem \ref{cor-SO embedding}.

\begin{remark}\label{remark7}
	
	\begin{enumerate}[label = (\arabic*)]
		
		\item A shortest SO embedding code of a binary $[n, 2]$ code and a binary $[n, 3]$ code can be obtained by adding exactly three or fewer columns.
		
		\item A shortest SO embedding code of a binary $[n, 4]$ code can be obtained by adding exactly five or fewer columns.
	\end{enumerate}
	
\end{remark}

For a $k \times n$ matrix $G$ and $0<j\le k$, let $\I(j)$ be a multiset
\begin{displaymath}
\I(j):= \left\{ c_i(G) \left|  \begin{array}{l}
(i)~ 1\le i \le n \\
(ii)~ c_i(G) = \mathsf{h}_t \text{ for } 1 \le t \le 2^k-1 \\\text{ satisfying } \lfloor{\frac{t}{2^{j-1}}}\rfloor \equiv_2 1
\end{array} \right. \right\}. \end{displaymath}
In other words, $\I(j)$ is a multiset of the columns which has a 1 in the $j$-th row from the bottom of $G$.

Then the next theorem is proved as a characterization for self-orthogonality in terms of $\I(j)$ in \cite{Kim2021}.
\begin{theorem}[Lemma 2 \cite{Kim2021}]\label{cond} Let $\CC$ be an $[n,k]$ code generated by $G$. Then, $\CC$ is self-orthogonal if and only if for all $1\le j \le j' \le k$, $|\I(j)\cap \I(j')|$ is even.
\end{theorem}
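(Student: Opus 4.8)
The plan is to rewrite both sides of the claimed equivalence as conditions on the pairwise inner products of the rows $r_1(G),\dots ,r_k(G)$ of $G$, and then match the two lists index by index. First I would record the elementary reformulation of self-orthogonality: $\CC\subseteq\CC^{\perp}$ holds if and only if every two rows of a generator matrix are orthogonal, i.e.\ $r_a(G)\cdot r_b(G)=0$ in $\F$ for all $1\le a,b\le k$ (equivalently, $GG^{\mathsf T}=0$ over $\F$). Since the inner product is symmetric and the diagonal entry $r_a(G)\cdot r_a(G)$ is just the weight of $r_a(G)$ modulo $2$, it is enough to test that $r_a(G)\cdot r_b(G)$ is even for all $1\le a\le b\le k$.

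Next I would unwind the definition of $\I(j)$. By the construction of $H_k$, the column $\mathsf h_t$ is the length-$k$ binary expansion of $t$ with the top row carrying the most significant bit, so the entry of $\mathsf h_t$ in row $m$ is the bit of $t$ of weight $2^{k-m}$. The requirement $\lfloor t/2^{j-1}\rfloor\equiv_2 1$ in the definition of $\I(j)$ says precisely that the bit of $t$ of weight $2^{j-1}$ equals $1$, i.e.\ that the $(k+1-j)$-th coordinate of $\mathsf h_t$ equals $1$. Hence, in the notation fixed just before the theorem, $\mathsf h_t$ occurs in $\I(j)$ with multiplicity $\ell_{\mathsf h_t}(G)$ if that coordinate is $1$ and with multiplicity $0$ otherwise; equivalently, a column $c_i(G)$ lies in $\I(j)$ exactly when $G_{k+1-j,\,i}=1$.

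It then follows that for $1\le j\le j'\le k$ a vector $\mathsf h_t$ contributes $\ell_{\mathsf h_t}(G)$ to $\I(j)\cap\I(j')$ when its coordinates in rows $k+1-j$ and $k+1-j'$ are both $1$, and nothing otherwise; summing over $t$ gives
\[
|\I(j)\cap\I(j')| \;=\; \#\{\, i : G_{k+1-j,\,i}=G_{k+1-j',\,i}=1\,\} \;=\; r_{k+1-j}(G)\cdot r_{k+1-j'}(G)\quad\text{as an integer.}
\]
As $(j,j')$ runs over all pairs with $1\le j\le j'\le k$, the pair $(k+1-j,\,k+1-j')$ runs over all $(a,b)$ with $1\le b\le a\le k$; together with symmetry of the inner product this is exactly the index set from the first step. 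Therefore ``$|\I(j)\cap\I(j')|$ is even for all $1\le j\le j'\le k$'' is equivalent to ``$r_a(G)\cdot r_b(G)=0$ for all $1\le a,b\le k$'', which is equivalent to $\CC$ being self-orthogonal.

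I do not anticipate a substantial obstacle here; the whole content is the translation displayed above, and the only thing that needs care is the bookkeeping: the bit-ordering convention of $H_k$ (top row $=$ most significant bit), which forces the reindexing $j\mapsto k+1-j$; reading $\I(j)$ as a multiset that records columns of $G$ with multiplicity, so that repeated columns are counted correctly and zero columns of $G$ — which lie in no $\I(j)$ and contribute $0$ to every inner product — cause no trouble; and observing that $\I(j)\cap\I(j')$ detects coincidence of column position rather than mere equality of column vectors.
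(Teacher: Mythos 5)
The paper does not actually prove this statement; it is imported verbatim from \cite{Kim2021} (Theorem III.2 there), so there is no internal proof to compare yours against. Your argument is correct and is the natural one: both sides reduce to the parities of the row inner products $r_a(G)\cdot r_b(G)$, and you handle the two bookkeeping points that matter --- the most-significant-bit-on-top convention for $H_k$, which forces $c_i(G)\in\I(j)\iff G_{k+1-j,\,i}=1$ and hence the reindexing $j\mapsto k+1-j$, and the fact that zero columns lie in no $\I(j)$ and contribute nothing to any inner product. One small remark: whether $|\I(j)\cap\I(j')|$ is read as counting coinciding column positions or as the usual min-of-multiplicities intersection of multisets, the two counts agree, because a column value $\mathsf h_t$ qualifies for $\I(j)$ either with its full multiplicity $\ell_{\mathsf h_t}(G)$ or not at all; so your computation is unambiguous, and it is also consistent with the identity $|\I(j)\cap\I(j')|=(r_j\otimes r_{j'})\cdot\ell(G)$ (rows of $H_k$) used later in the proof of Theorem~\ref{so_check}, up to that same reindexing of the pairs $(j,j')$.
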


The following main Theorem characterizes self-orthogonality using the vector $\ell(G)$ and the matrix $SO_k$.
\begin{theorem}\label{so_check}
	Let $\CC$ be a binary $[n,k]$ code generated by $G$. Then, $\CC$ is self-orthogonal if and only if $$SO_k\ell(G)^T = \0.$$
\end{theorem}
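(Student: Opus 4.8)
The plan is to reduce Theorem~\ref{so_check} to the already-established characterization in Theorem~\ref{cond} by translating the condition ``$|\I(j)\cap\I(j')|$ is even for all $1\le j\le j'\le k$'' into a single matrix equation over $\F$. First I would observe that the multiset $\I(j)$ records exactly those columns $c_i(G)$ that equal some $\mathsf{h}_t$ with the $j$th binary digit of $t$ equal to $1$; equivalently, $\I(j)$ collects the columns $\mathsf{h}_t$ for which the $j$th coordinate of $\mathsf{h}_t$ (as a vector, using the standard ordering of $H_k$) is $1$. Thus membership of $\mathsf{h}_t$ in $\I(j)$ is governed precisely by the entry $(r_j)_t$, the $t$th component of the $j$th row of $H_k$. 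Counting with multiplicity, $|\I(j)| \equiv \sum_{t=1}^{2^k-1} (r_j)_t\,\ell_t \pmod 2$, and more importantly $|\I(j)\cap\I(j')| \equiv \sum_{t=1}^{2^k-1} (r_j)_t (r_{j'})_t\,\ell_t \pmod 2$, since a copy of column $\mathsf{h}_t$ lies in the intersection iff both $(r_j)_t=1$ and $(r_{j'})_t=1$.

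Next I would identify the right-hand side of the last display with the rows of $SO_k$. By definition, the $(i,j)$-entry of the vector $r_i\otimes r_j$ is $(r_i)_t(r_j)_t$, so the row of $SO_k$ indexed by the pair $(j,j')$ with $j\le j'$ is exactly the vector whose $t$th entry is $(r_j)_t(r_{j'})_t$. Consequently, the $(j,j')$-component of the matrix--vector product $SO_k\,\ell(G)^T$ over $\F$ equals $\sum_{t=1}^{2^k-1}(r_j)_t(r_{j'})_t\,\ell_t \bmod 2$, which by the previous paragraph equals $|\I(j)\cap\I(j')| \bmod 2$. Here I would also note the harmless identification of the diagonal rows: since $r_j$ is a binary vector, $r_j\otimes r_j=r_j$, which matches the fact that $|\I(j)\cap\I(j)|=|\I(j)|$.

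Putting these together: $SO_k\,\ell(G)^T=\0$ over $\F$ if and only if every component $|\I(j)\cap\I(j')|$ is even for all $1\le j\le j'\le k$, which by Theorem~\ref{cond} holds if and only if $\CC$ is self-orthogonal. I expect the only subtlety — not a deep obstacle, but the step deserving care — is the bookkeeping that the ``floor/mod $2$'' description of $\I(j)$ in terms of the integer index $t$ agrees with the ``$j$th coordinate of $\mathsf{h}_t$'' description; this is just the statement that the columns of $H_k$ are the binary expansions of $1,\dots,2^k-1$ read with a fixed bit order, which one checks directly against the displayed $H_3$. Everything else is the linearity of counting intersections modulo $2$, together with the observation that $\ell(G)$ already discards the (irrelevant) even parts of the column multiplicities.
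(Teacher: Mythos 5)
Your proposal is correct and follows essentially the same route as the paper: both reduce the statement to Theorem~\ref{cond} via the identity $|\I(j)\cap\I(j')| \equiv (r_j\otimes r_{j'})\cdot\ell(G) \pmod 2$, which the paper asserts as "easy to check" and you verify in detail (including the harmless index bookkeeping between the floor-based definition of $\I(j)$ and the rows of $H_k$, which at worst permutes the set of pairs $(j,j')$ and so does not affect the conjunction of conditions). No gaps.
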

\begin{proof}
	By Theorem \ref{cond}, we know that $\CC$ is self-orthogonal if and only if for all $1\le j \le j' \le k$,	$|\I(j)\cap \I(j')|$ is even. From definitions of $SO_k$, $\ell(G)$, and $\I(j)$, it is easy to check that $|\I(j)\cap \I(j')| = (\r_{k+1-j} \cap \r_{k+1-j'})\cdot \ell(G)$ for all $1\le j \le j' \le k$. Therefore, $\CC$ is self-orthogonal if and only if for all $1\le j \le j' \le k$,	$(\r_{k+1-j} \cap \r_{k+1-j'})\cdot \ell(G)= 0$, equivalently, $SO_k \ell(G)^T = 0.$
\end{proof}	

\begin{corollary}\label{lem_so_check}
	Let $\CC$ be a binary $[n,k]$ code generated by $G$ and let $\HH$ be a binary code generated by the matrix $SO_k$.  Then, $\CC$ is self-orthogonal if and only if the vector $\ell(G)$ belongs to $\HH^\perp$.
\end{corollary}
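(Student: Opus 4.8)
The plan is to derive this corollary as an essentially immediate reformulation of Theorem~\ref{so_check}, using only the definition of the dual code and of the matrix $SO_k$ as a generator matrix for $\HH$. First I would recall that, by construction, $SO_k$ is a $\frac{k(k+1)}{2}\times(2^k-1)$ matrix whose rows are exactly the vectors $r_j\otimes r_{j'}$ for $1\le j\le j'\le k$, and that $\HH$ is defined to be the binary code spanned by these rows, i.e. the row space of $SO_k$. Since $\ell(G)$ is a vector of length $2^k-1$ over $\F$, the product $SO_k\,\ell(G)^T$ is well-defined, and its $m$th coordinate is precisely the inner product of the $m$th row of $SO_k$ with $\ell(G)$.

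The key step is then the observation that a vector $\x\in\F^{2^k-1}$ lies in $\HH^\perp$ if and only if $\x$ is orthogonal to every generator of $\HH$, that is, to every row of $SO_k$; equivalently, $SO_k\,\x^T=\0$. Applying this with $\x=\ell(G)$ gives that $\ell(G)\in\HH^\perp$ if and only if $SO_k\,\ell(G)^T=\0$. Combining this equivalence with Theorem~\ref{so_check}, which states that $\CC$ is self-orthogonal if and only if $SO_k\,\ell(G)^T=\0$, yields the claimed equivalence: $\CC$ is self-orthogonal if and only if $\ell(G)\in\HH^\perp$.

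There is no real obstacle here; the only point that warrants a line of justification is the standard fact that orthogonality to a spanning set is equivalent to orthogonality to the whole subspace, which follows from bilinearity of the inner product over $\F$. I would also note, for completeness, that it does not matter that $SO_k$ may have linearly dependent rows (indeed the rows $r_i\otimes r_i=r_i$ already appear among the rows of $H_k$), since $\HH^\perp$ depends only on the row space of $SO_k$, and the condition $SO_k\,\ell(G)^T=\0$ is unaffected by repeating or deleting dependent rows. Thus the proof amounts to: $\CC$ self-orthogonal $\iff SO_k\,\ell(G)^T=\0$ (Theorem~\ref{so_check}) $\iff \ell(G)$ annihilates every row of $SO_k$ $\iff \ell(G)\in\HH^\perp$.
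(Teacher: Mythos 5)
Your proposal is correct and follows exactly the paper's route: invoke Theorem~\ref{so_check} to reduce self-orthogonality to $SO_k\,\ell(G)^T=\0$, then note that this condition is equivalent to $\ell(G)\in\HH^\perp$ because $SO_k$ generates $\HH$. Your version simply spells out the standard fact that orthogonality to a spanning set equals orthogonality to the span, which the paper leaves implicit.
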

\begin{proof}
	By Theorem \ref{so_check}, we know that $\CC$ is self-orthogonal if and only if $SO_k\ell(G)^T = \0.$ Since $SO_k$ is a generator matrix of $\HH$, the corollary follows.
\end{proof}

\begin{example}\label{ex_11}~
	\begin{enumerate}[label = (\arabic*)]
		\item Let $G_{8,3}$ be the generator matrix of $\CC_{8,3}$ in Example \ref{ex8_3}. We have 
		$$SO_3 =\left(\begin{smatrix}{ccccccc}
		0	& 0  &0&1&1&1&1\\
		0	& 1 &1&0&0&1&1\\
		1	& 0 &1&0&1&0&1\\ 
		0& 0 &0&0&0&1&1	\\
		0& 0 &0&0&1&0&1\\
		0& 0 &1&0&0&0&1
		\end{smatrix}\right), \text{ and } 
		\ell(G_{8,3})=0 11 0 0 0 0$$
		Thus, 
		$$SO_3 \ell(G_{8,3})^T = \left(\begin{smatrix}{ccccccc}
		0	& 0  &0&1&1&1&1\\
		0	& 1 &1&0&0&1&1\\
		1	& 0 &1&0&1&0&1\\ 
		0& 0 &0&0&0&1&1	\\
		0& 0 &0&0&1&0&1\\
		0& 0 &1&0&0&0&1
		\end{smatrix}\right)\left(\begin{smatrix}{c}
		0\\
		1	\\
		1\\
		0\\ 
		0	\\
		0\\
		0
		\end{smatrix}\right)= \left(\begin{smatrix}{c}
		0\\
		0	\\
		1\\
		0\\ 
		0	\\
		1
		\end{smatrix}\right),$$ and $\CC_{8,3}$ is not self-orthogonal by Theorem \ref{so_check}.
		
		\item  Let ${G_{10,3}}$ be a matrix
		$${G_{10,3}}=\left(\begin{smatrix}{cccccccccc}
		1&1&0&1&1&1&1&0&0&0\\
		0&0&1&1&1&0&0&1&1&1\\
		0&0&0&0&0&1&1&1&0&1
		\end{smatrix}\right).$$ Then $
		\ell({G_{10,3}})=0000000,$ and we have 
		$$SO_3 \ell(G_{10,3})^T = \0.$$ Therefore, by Theorem \ref{so_check}, we know that ${{G}_{10,3}}$ generates a self-orthogonal code.

	\end{enumerate}
	
\end{example}

In the next theorem, we discuss the relationship between the self-orthogonality matrix $SO_k$ and Reed-Muller code  $\R(r,m)$ defined in \cite[Chapter 1.10]{HP2010}.

\begin{theorem}\label{thm-SO-RM}
	Let $\R(r,m)$ be the $r$th order binary Reed-Muller code of length $2^m$. Then for $k \ge 3$,
	
	\begin{enumerate}[label = \roman*)]
		\item $SO_k$ is obtained by shortening on the first coordinate position from a binary Reed-Muller code $\R(2, k)$. Hence $\ell(G)$ is orthogonal to  $SO_k$ if and only if  $\ell(G)$ belongs to  the puncture of a Reed-Muller code $\R(k-3, k)$ on the first position.
		\item The covering radius $\rho$ of the puncture of a Reed-Muller code $\R(k-3, k)$ is $k+1$ if $k$ is even and $k$ if $k$ is odd.
		
	\end{enumerate}
\end{theorem}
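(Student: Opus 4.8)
\emph{Part (1).} The plan is to pass to the polynomial model of the Reed--Muller code. Index the $2^k$ coordinates of $R(2,k)$ by the points of $\F^k$, ordered so that $\0$ comes first and the remaining $2^k-1$ points are ordered as the columns of $H_k$; a codeword of $R(2,k)$ is then the evaluation vector of a Boolean polynomial of degree $\le 2$, and the evaluations of the monomials $x_a$ and $x_ax_b$ restricted to $\F^k\setminus\{\0\}$ are exactly the rows $r_a$ and $r_a\otimes r_b$ of $SO_k$. Since a polynomial of degree $\le 2$ vanishes at $\0$ precisely when its constant term is $0$, the shortening of $R(2,k)$ on the first coordinate is the span of these restrictions, i.e. it has $SO_k$ as a generator matrix; the $\binom{k+1}{2}$ rows of $SO_k$ are independent because evaluation is injective on polynomials of degree $<k$, and this matches $\dim R(2,k)-1$, the constant polynomial being the unique basic one nonzero at $\0$. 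Now $\ell(G)$ is orthogonal to $SO_k$ iff $\ell(G)$ lies in the dual of this shortened code; by the standard shortening/puncturing duality $(\mathrm{Shorten}_1 R(2,k))^{\perp}=\mathrm{Puncture}_1(R(2,k)^{\perp})$ and by Reed--Muller duality $R(2,k)^{\perp}=R(k-3,k)$, so $\ell(G)$ is orthogonal to $SO_k$ iff $\ell(G)\in\mathrm{Puncture}_1 R(k-3,k)$. This proves (1).

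\emph{Part (2), reduction to linear algebra.} Write $\DD:=\mathrm{Puncture}_1 R(k-3,k)$; by (1) the matrix $SO_k$ is a parity-check matrix of $\DD$, so $\rho(\DD)$ equals the maximum over syndromes $s$ of the least number of columns of $SO_k$ summing to $s$ (covering radius as the maximal coset-leader weight). The $t$-th column of $SO_k$ is the vector $\nu(\mathsf{h}_t)$ whose first $k$ entries are $\mathsf{h}_t$ and whose remaining $\binom{k}{2}$ entries are the products $(\mathsf{h}_t)_a(\mathsf{h}_t)_b$ for $a<b$, so the columns of $SO_k$ are exactly $\{\nu(\y):\y\in\F^k\setminus\{\0\}\}$. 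Encode a syndrome $s=(\lambda,q)\in\F^k\times\F^{\binom{k}{2}}$ by the symmetric $k\times k$ matrix $N_s$ over $\F$ with diagonal $\lambda$ and off-diagonal $q$. A short computation using $x^2=x$ over $\F$ shows that $\sum_{j=1}^t\nu(\y_j)=s$ iff $\sum_{j=1}^t\y_j\y_j^{T}=N_s$: the off-diagonal entries give $\sum_j(\y_j)_a(\y_j)_b=q_{ab}$, while the diagonal entries give $\sum_j\y_j=\lambda$, which is precisely the linear constraint on $s$, so it is forced automatically. Discarding zero vectors from a minimal decomposition, we conclude $\rho(\DD)=\max_N\tau(N)$, where $N$ ranges over symmetric $k\times k$ matrices over $\F$ and $\tau(N)$ is the least number of rank-one symmetric matrices $\y\y^{T}$ with sum $N$.

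\emph{Part (2), value of $\tau$ and conclusion.} Here I would invoke the classical classification of symmetric bilinear forms over $\F$: every symmetric $N$ is congruent either to $I_r\oplus 0$ when $N$ is not alternating (with $r=\mathrm{rank}\,N$), or to an orthogonal direct sum of alternating rank-two blocks and a zero block when $N$ is alternating (so $\mathrm{rank}\,N$ is even). A congruence $N\mapsto PNP^{T}$ with $P$ invertible preserves $\tau$ (it carries $\sum\y_j\y_j^{T}$ to $\sum(P\y_j)(P\y_j)^{T}$) and preserves being alternating (since $\v^{T}PNP^{T}\v=(P^{T}\v)^{T}N(P^{T}\v)$), so it suffices to evaluate $\tau$ on normal forms. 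For $I_r\oplus 0$ one has $\tau=r$, from $\sum_{i=1}^r e_ie_i^{T}$ and the bound $\tau\ge\mathrm{rank}$. For an alternating $N\ne\0$ of rank $r$ (even): a decomposition into only $r$ terms would force $N$ congruent to $I_r\oplus 0$, hence non-alternating, a contradiction, so $\tau\ge r+1$; conversely, adding $e_1e_1^{T}$ in the normal-form basis turns $N$ into a non-alternating matrix still of rank $r$ (the top-left rank-two block stays invertible, its determinant becoming $1$), so $N=e_1e_1^{T}+\sum_{i=1}^{r}\y_i\y_i^{T}$ and $\tau\le r+1$. Thus $\tau(N)=\mathrm{rank}\,N$ for non-alternating $N$ and $\tau(N)=\mathrm{rank}\,N+1$ for alternating $N\ne\0$. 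Maximizing over $k\times k$ matrices: non-alternating $N$ give $\tau\le k$, attained by $I_k$; alternating $N$ give $\tau=\mathrm{rank}\,N+1$, where the largest even rank available is $k$ if $k$ is even and $k-1$ if $k$ is odd. Hence $\max_N\tau(N)=k+1$ when $k$ is even and $k$ when $k$ is odd, which is exactly the asserted value of $\rho(\DD)$.

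I expect the fussiest bookkeeping to be aligning the coordinate orderings in the shortening/puncturing and Reed--Muller dualities of part (1), while the genuine content of part (2) is the value of $\tau$ on alternating forms --- both inequalities there rest on the single fact that a congruence can never carry an alternating form to $I_r\oplus 0$.
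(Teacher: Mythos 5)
Your proposal is correct. Part (1) follows essentially the same route as the paper: identify $SO_k$ with the shortening of $R(2,k)$ at the first position (the paper simply asserts this "from the definition"; you justify it via the evaluation model, which is the right justification), then apply the shortening/puncturing duality of Huffman--Pless Theorem 1.5.7 together with $R(2,k)^\perp = R(k-3,k)$. Part (2), however, is where you genuinely diverge. The paper's proof is a two-line citation: it quotes McLoughlin's 1979 theorem that the covering radius of $R(k-3,k)$ is $k+2$ ($k$ even) or $k+1$ ($k$ odd), observes that $R(k-3,k)$ is an even code since the all-one vector lies in its dual, and invokes the standard exercise that puncturing an even code drops the covering radius by exactly one. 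You instead compute the covering radius of the punctured code directly: you use $SO_k$ as a parity-check matrix, identify syndromes with symmetric $k\times k$ matrices over $\F$, translate coset-leader weight into the minimal number of rank-one summands $\y\y^{T}$, and settle that quantity via the classification of symmetric bilinear forms into alternating and non-alternating types (the key point being that a sum of $r$ rank-one terms of total rank $r$ is congruent to $I_r\oplus 0$, hence non-alternating). Your reduction is sound --- repeated columns cancel in pairs so the "multiset" and "subset" formulations of coset-leader weight agree, and the diagonal constraints are indeed absorbed into $\sum_j \y_j\y_j^T = N_s$. What the paper's route buys is brevity at the cost of leaning on a nontrivial external covering-radius theorem; what yours buys is a self-contained, elementary proof that in effect re-derives the needed special case of McLoughlin's result and, as a bonus, makes explicit \emph{which} columns realize the worst case (the identity form for odd $k$, a full-rank alternating form for even $k$), which is directly relevant to the embedding algorithm built on this theorem.
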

\begin{proof}
	i) From the definition of $SO_k$, it is easy to see that
	$SO_k$ is obtained by shortening on the first coordinate position from a binary Reed-Muller code $\R(2, k)$. The dual of this shortened RM code is the puncture of dual of $\R(2, k)$ on the first position by \cite[Theorem 1.5.7]{HP2010}, and so it is equal to the puncture of $\R(k-3, k)$.
	
	ii)
	By \cite{Aileen1979}  the covering radius  $\rho$ of a Reed-Muller code $\R(k-3, k)$ is
	$k+2$ if $k$ is even and
	$k+1$ if $k$ is odd.
	It is easy to see that $\R(k-3, k)$ is even since all-one vector is in its dual. Therefore by \cite[Ex. 625]{HP2010},  the covering radius $\rho$ of the puncture of a Reed-Muller code $\R(k-3, k)$ is $k+1$ if $k$ is even and $k$ if $k$ is odd.
\end{proof}

\begin{corollary}
	The rank of the matrix $SO_k$ is $\frac{k(k+1)}{2}$.
\end{corollary}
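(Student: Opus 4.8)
The plan is to identify $SO_k$ with a shortening of the Reed–Muller code $\R(2,k)$, as already established in Theorem \ref{thm-SO-RM}(1), and then compute the dimension of that shortened code. Recall that $\R(2,k)$ has length $2^k$ and dimension $1 + k + \binom{k}{2} = 1 + \frac{k(k+1)}{2}$, a standard fact from \cite[Chapter 1.10]{HP2010}. Shortening a code of length $N$ and dimension $D$ on one coordinate yields, in general, a code of length $N-1$ and dimension either $D$ or $D-1$; it equals $D-1$ precisely when the coordinate being shortened is not identically zero on the code, i.e.\ when not every codeword vanishes there. First I would verify that the first coordinate of $\R(2,k)$ is a non-trivial coordinate (for instance, the degree-$0$ monomial, the all-one codeword, is supported there), so shortening strictly drops the dimension by one. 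This gives $\dim(SO_k) = \frac{k(k+1)}{2}$.

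It then remains to observe that $\dim(SO_k)$, the dimension of the row space of $SO_k$, is by definition the rank of the matrix $SO_k$. Since $SO_k$ has exactly $\frac{k(k+1)}{2}$ rows (namely the vectors $r_i \otimes r_j$ for $1 \le i \le j \le k$), these rows are therefore linearly independent, and the rank is exactly $\frac{k(k+1)}{2}$.

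Alternatively — and this is perhaps the cleaner route to write — one can argue directly that the $\frac{k(k+1)}{2}$ rows $r_i \otimes r_j$ are linearly independent over $\F$, without invoking the RM dimension. The columns of $H_k$ run over all nonzero vectors of $\F^k$, so evaluating the row $r_i \otimes r_j$ at the column indexed by $v = (v_1,\dots,v_k) \in \F^k \setminus \{\0\}$ gives the value $v_i v_j$. Thus a linear combination $\sum_{i \le j} a_{ij}\,(r_i \otimes r_j)$ is the zero vector iff the quadratic form $Q(v) = \sum_{i \le j} a_{ij} v_i v_j$ vanishes on all $v \ne \0$ in $\F^k$; since $Q(\0) = 0$ automatically, this forces $Q$ to be the zero function on $\F^k$. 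Over $\F = \F_2$ one checks that the monomials $\{x_i x_j : 1 \le i \le j \le k\}$ are linearly independent as functions $\F^k \to \F$ — e.g.\ $x_i x_i = x_i$ gives the $k$ linear monomials, $x_i x_j$ with $i < j$ the genuine degree-$2$ ones, and these $\binom{k}{2} + k$ functions are part of the standard monomial basis of the Reed–Muller hierarchy — so all $a_{ij} = 0$.

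The only real subtlety, and the step I would be most careful about, is the dimension drop under shortening: one must confirm that the shortened coordinate is not in the "zero part" of $\R(2,k)$, so that the dimension genuinely decreases by one rather than staying put. Once that is pinned down (via the all-one codeword, which is the constant monomial and is nonzero in every coordinate), everything else is bookkeeping: $\frac{k(k+1)}{2}$ rows spanning a space of dimension $\frac{k(k+1)}{2}$ are automatically independent, so $\operatorname{rank}(SO_k) = \frac{k(k+1)}{2}$.
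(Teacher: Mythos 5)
Your first argument is essentially the paper's own proof: both identify the row space of $SO_k$ with the shortening of $\R(2,k)$ on the first coordinate (Theorem \ref{thm-SO-RM}(1)), use $\dim \R(2,k) = 1 + k + \binom{k}{2}$, and conclude that shortening drops the dimension by one. You are actually slightly more careful than the paper, which asserts the drop without comment: you justify it by noting the all-one (constant monomial) codeword is nonzero at the shortened position, so the dimension genuinely decreases rather than staying put. Your alternative argument --- that $\sum_{i\le j} a_{ij}(r_i\otimes r_j)=\0$ forces the quadratic form $Q(v)=\sum_{i\le j}a_{ij}v_iv_j$ to vanish on all of $\F^k$, and that the $k+\binom{k}{2}$ multilinear monomials $x_i$ and $x_ix_j$ ($i<j$) are linearly independent as functions on $\F^k$ --- is a genuinely different, self-contained route that avoids citing the Reed--Muller dimension formula and shortening facts altogether; it proves directly that the rows of $SO_k$ are independent, at the cost of reproving a small piece of the Reed--Muller basis theory. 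Both are correct.
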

\begin{proof}
	By \cite[Theorem 1.5.7]{HP2010}, we know that the dimension of Reed-Muller code $\R(2, k)$ equals $$ {k \choose 0} + {k \choose 1} + {k \choose 2}=1+k+\frac{k(k-1)}{2}.$$ Since $SO_k$ is obtained by shortening on the first coordinate position of $\R(2, k)$, the rank of the matrix $SO_k$ is one less than  the dimension of $\R(2, k)$. Thus the rank of the matrix $SO_k$ is $\frac{k(k+1)}{2}.$
\end{proof}

The following theorem generalizes Remark \ref{remark7} for any $k \ge 2$.
\begin{theorem}\label{cor-SO embedding}
	Given any binary $[n,k]$ code $\CC$ with $k \ge 2$, we can obtain a shortest SO embedding $\tilde{\CC}$ by adding exactly or less 
	
	\begin{enumerate}[label = \roman*)]
		\item $k+1$ columns if $k$ is even or
		\item $k$ columns if $k$ is odd.
	\end{enumerate}
\end{theorem}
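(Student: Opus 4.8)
The plan is to translate the question into a covering-radius statement about the code $\HH^{\perp}$ of Corollary \ref{lem_so_check} and then read off the bound from Theorem \ref{thm-SO-RM}. Fix a generator matrix $G$ of $\CC$. Appending to $G$ a column equal to $\mathsf{h}_i$ increases $\ell_i$ by one and hence flips the $i$-th coordinate of the binary vector $\ell(G)$, while appending a zero column changes $\ell(G)$ not at all. Thus, if $S$ is a multiset of columns with parity vector $e(S)\in\F^{2^k-1}$ (its $i$-th entry being the number of copies of $\mathsf{h}_i$ in $S$, reduced mod $2$), then $\ell([G\,|\,S])=\ell(G)+e(S)$ and $|S|\ge\mathrm{wt}(e(S))$, with equality exactly when $S$ contains each $\mathsf{h}_i$ with $e(S)_i=1$ once and nothing else.

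By Corollary \ref{lem_so_check}, $[G\,|\,S]$ generates a self-orthogonal code if and only if $\ell(G)+e(S)\in\HH^{\perp}$, equivalently $e(S)$ lies in the coset $\ell(G)+\HH^{\perp}$. So I would take $e$ to be a least-weight vector of that coset and let $S=\{\mathsf{h}_i : e_i=1\}$; then $[G\,|\,S]$ is an SO code containing $\CC$, its minimum distance is at least $d(\CC)$ since every nonzero codeword only gains coordinates, and it is obtained by adjoining exactly $\mathrm{wt}(e)$ columns. By Theorem \ref{thm-SO-RM}(1), $\HH^{\perp}$ is the puncture of $\R(k-3,k)$ on the first coordinate, and by Theorem \ref{thm-SO-RM}(2) its covering radius $\rho$ equals $k+1$ when $k$ is even and $k$ when $k$ is odd; since every coset of a code contains a word of weight at most the covering radius, $\mathrm{wt}(e)\le\rho$, which is the asserted bound (and for $k=2,3,4$ this recovers Remark \ref{remark7}).

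To justify the word \emph{shortest} in the statement I would add the matching lower bound: for any SO embedding $[G\,|\,S']$ one has $e(S')\in\ell(G)+\HH^{\perp}$, so $|S'|\ge\mathrm{wt}(e(S'))\ge$ the coset weight $\mathrm{wt}(e)$; hence no SO embedding of $\CC$ adjoins fewer than $\mathrm{wt}(e)$ columns, and the construction above is optimal. The only mildly delicate part of the argument is this bookkeeping: one must observe that repeating a column (which alters $\ell_i$ by $2\equiv0$) or adjoining a zero column only lengthens the code without aiding self-orthogonality, so that the shortest embedding length is exactly the coset weight $\mathrm{wt}(e)\le\rho$. The two substantive ingredients — the self-orthogonality criterion $\ell(G)\in\HH^{\perp}$ and the covering radius of the punctured Reed--Muller code — have already been supplied, so nothing else is needed.
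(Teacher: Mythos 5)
Your proposal is correct and follows essentially the same route as the paper's proof: both reduce the problem, via the criterion $\ell(G)\in\HH^{\perp}$, to the covering radius of the punctured Reed--Muller code $\R(k-3,k)$ given in Theorem \ref{thm-SO-RM}(2). Your write-up is in fact more complete than the paper's, since you also record the column-parity bookkeeping and the matching lower bound (any SO embedding's parity vector must lie in the coset $\ell(G)+\HH^{\perp}$, so no embedding can use fewer columns than the coset's minimum weight), which is what justifies the word \emph{shortest} and is left implicit in the paper.
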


\begin{proof}
	By Remark \ref{remark7}.(1), we may assume $k \ge 3$. Let $\CC$ be a given binary $[n, k]$ code with generator matrix $G$. By Theorem~\ref{thm-SO-RM}, we can check whether it is self-orthogonal. If not, by ii) of Theorem~\ref{thm-SO-RM} any binary vector $\ell(G)$ can be corrected by changing at most $k+1$ positions of ${\bf v}$ if $k$ is even and $k$ positions of ${\bf v}$ if $k$ is odd. Since we are embedding $\CC$ into a self-orthogonal codes, we need to add $k+1$ columns (corresponding to those $k+1$ positions) to $G$ if $k$ is even, and $k$ columns (corresponding to those $k$ positions) to $G$ if $k$ is odd.
\end{proof}

\section{Shortest SO embedding algorithm}
In what follows, we describe a {\it shortest SO embedding algorithm} to construct an embedding of a given binary $[n, k]$ code. Normally, we consider a best-known linear $[n, k]$ code from $BKLC$ database in Magma~\cite{Magma1994}, hoping to obtain an optimal or best-known self-orthogonal code with the same dimension. 

\begin{algorithm}[Shortest SO embedding algorithm]\label{alg1}\hfill
	
	{\fontsize{10}{10} \sf
		\noindent $\bullet$ Input: A generator matrix $G$ of the $[n,k]$ code $\CC$. 
		
		\noindent $\bullet$ Output: A generator matrix $G'$ for a shortest SO embedding.
		\begin{enumerate}[label =  ({A}\arabic*)] 
			\item Obtain the vector $\ell(G)$ from $G$. Go to ({\bf A2}).
			\item If $SO_k \ell(G)^T = \0$, then $\CC$ is self-orthogonal. Go to ({\bf A6}).
			\item  If $SO_k \ell(G)^T \ne \0$, then let $\s=SO_k \ell(G)^T $ be the syndrome corresponding to the vector $\ell(G)$, and go to ({\bf A4}).
			\item Let $err_{\s}$ be a coset leader correspond to the syndrome $\s$ and go to ({\bf A5}).
			\item For the each index $i$ of non-zero elements in $err_{\s}$, append $G$ the column vector which is the binary representation of $i$. Go to ({\bf A6}). 
			\item Return $G$ and terminate the algorithm.
		\end{enumerate}
	}
\end{algorithm}

We remark that a discussion of various decoding algorithms for Reed-Muller codes can be found in \cite{abbe2021}.

\begin{example}~
	\begin{enumerate}[label = (\arabic*)]
		\item Let $G_{8,3}$ be the generator matrix of $\CC_{8,3}$ in Example \ref{ex8_3}. In Example \ref{ex_11}, we obtained the syndrome $\s = 1001000^T$ of the vector $\ell(G_{8,3})=0110000.$ In this case, $\ell(G_{8,3})$ itself is a coset leader $err_{\s}$ corresponding to the syndrome $\s$. Therefore, by adding two columns that are binary representations of 2 and 3, we obtain a matrix $${\widetilde{G}_{8,3}}=\left(\begin{smatrix}{cccccccc||cc}
		1&1&0&1&1&1&1&0&0&0\\
		0&0&1&1&1&0&0&1&1&1\\
		0&0&0&0&0&1&1&1&0&1
		\end{smatrix}\right),$$ which generates a self-orthogonal $[10, 3, 4]$ code.		
		
		\item Let $\CC_{11,4}$ be the best-known linear $[11, 4,5]$ code from $BKLC$ database in Magma~\cite{Magma1994} with generator matrix 
		$$G_{11,4}=\left(\begin{smatrix}{cccccccccccc}
		1 &0& 0& 1& 0& 1& 0 &1& 1& 0& 1\\
		0 &1& 0& 1&1& 0& 0 &1 &0 &1& 1\\
		0& 0& 1& 1& 1& 1& 0& 0& 1 &1& 1\\
		0& 0& 0& 0& 0& 0& 1& 1& 1& 1& 1\end{smatrix}\right).$$
		Then 
		$\ell(G_{11,4})=110101110110111$.
		
		Since $$SO_4 = \left(\begin{smatrix}{cccccccccccccccc}
		0 &0&0&0& 0 &0&0 &1& 1 &1&1&1& 1 &1&1\\
		0 &0&0&1& 1 &1&1 &0& 0 &0&0&1& 1 &1&1\\
		0 &1&1&0& 0 &1&1 &0& 0 &1&1&0& 0 &1&1\\
		1 &0&1&0& 1&0&1  &0& 1 &0&1&0& 1&0&1\\ \hline
		0 &0&0&0& 0 &0&0 &0& 0 &0&0&1& 1 &1&1\\
		0 &0&0&0& 0 &0&0 &0& 0 &1&1&0& 0 &1&1\\
		0 &0&0&0& 0&0&0 &0& 1 &0&1&0& 1&0&1\\ 
		0 &0&0&0& 0 &1&1 &0& 0 &0&0&0& 0 &1&1\\
		0&0&0&0& 1&0&1  &0& 0 &0&0&0& 1&0&1\\
		0 &0&1&0& 0&0&1  &0& 0 &0&1&0& 0&0&1
		\end{smatrix}\right),$$ its syndrome is $$\s= SO_4\ell(G_{11,4})^T=0011101011^T.$$
		Since the coset leader corresponding to this syndrome $\s$ is $000001000100001$, we add three columns which are the binary representations of 6,10, and 15, respectively. 
		Consequently, we obtain a $[14,4,6]$ self-orthogonal code with generator matrix
		$$\widetilde{G}_{11,4}=\left(\begin{smatrix}{cccccccccccccccc}
		1&0&0&1&0&1&0&1&1&0&1&0&1&1\\
		0&1&0&1&1&0&0&1&0&1&1&1&0&1\\
		0&0&1&1&1&1&0&0&1&1&1&1&1&1\\
		0&0&0&0&0&0&1&1&1&1&1&0&0&1\end{smatrix}\right).$$

	\end{enumerate}
	
\end{example}

\subsection{Optimal self-orthogonal codes of dimension 5}
In \cite{Boukllieve2006}, Bouyuklliev et al. determined parameters of optimal $[n,5]$ SO codes of lengths up to 40. However, using Algorithm \ref{alg1}, we successfully constructed many new optimal $[n,5]$ SO codes of lengths up to 256 starting from best-known linear codes($BKLC$ database) of MAGMA\cite{Magma1994} and their punctured codes on at most six columns.

In \cite{Kim2021}, authors gave the next conjecture based on their computational results.

\begin{conjecture}[Conjecture 25 in \cite{Kim2021}]\label{conj}
	For $n\ge 10$, if $n \neq 13$ and $n \equiv 6,13,14,21,22,28,29 \pmod{31}$, then $\dso(n,5) = d(n,5) - 2$, i.e., there are no $[n,5,d(n,5)]$ SO codes.
\end{conjecture}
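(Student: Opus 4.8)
The plan is to prove the nonexistence claim by converting it into a feasibility question for column multiplicities and then obstructing that feasibility with the parity condition of Theorem~\ref{so_check}. Fix $k=5$, so all the relevant combinatorics live on the $2^5-1=31$ nonzero vectors $\mathsf{h}_1,\dots,\mathsf{h}_{31}$, i.e.\ the points of $PG(4,2)$. Suppose toward a contradiction that for some $n\ge 10$, $n\neq 13$, with $n\equiv 6,13,14,21,22,28,29\pmod{31}$ there is a self-orthogonal $[n,5,d(n,5)]$ code $\CC$ with generator matrix $G$, and let $a_i$ be the number of columns of $G$ equal to $\mathsf{h}_i$, so that $\sum_i a_i=n$ and $a_i\ge 0$. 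By Theorem~\ref{so_check}, self-orthogonality is exactly the parity condition $SO_5\,\ell(G)^T=\0$, where $\ell(G)=(a_1,\dots,a_{31})\bmod 2$; equivalently, by Corollary~\ref{lem_so_check}, $\ell(G)\in\HH^\perp$. For each nonzero $u\in\F^5$ the codeword weight is $\mathrm{wt}(uG)=\sum_{i:\,u\cdot\mathsf{h}_i=1}a_i$, a sum of the $a_i$ over the complement of a hyperplane, a $16$-element set. The first step is thus to record the minimum-distance requirement $\min_{u\neq\0}\mathrm{wt}(uG)=d(n,5)$ together with $\sum a_i=n$ and the parity condition as a single integer-feasibility system.

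Second, I would pin down $d(n,5)$ on each residue class and in particular show it is even there, so that $\dso(n,5)=d(n,5)$ would require a genuinely distance-optimal self-orthogonal code (not merely the trivial shortfall forced by parity). Summing $\mathrm{wt}(uG)$ over all $31$ nonzero $u$ gives $\sum_{u\neq\0}\mathrm{wt}(uG)=16n$, since each $\mathsf{h}_i$ lies in the complement of exactly $16$ hyperplanes; hence the average weight is $16n/31$. Combining this averaging identity with the Griesmer bound and the Belov/Solomon--Stiffler description of Griesmer-optimal $[n,5]$ codes (columns forming the complement of a union of flats) should give $d(n,5)$ as an explicit function of $n\bmod 31$ and exhibit it as even on precisely these seven residues. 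The period-$31$ structure is transparent here: adjoining one full copy of the simplex $\S_5$ adds $16$ to every nonzero weight and adds the all-one vector $\1$ to $\ell(G)$, and since $\1\in\HH^\perp=$ the punctured $R(2,5)$ (Theorem~\ref{thm-SO-RM}), this operation preserves both optimality and self-orthogonality status, so it suffices to analyze a residual multiset of size $\equiv r\pmod{31}$.

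Third, and this is the heart, I would show that the parity vector $\ell$ of any distance-optimal residual configuration fails to lie in $\HH^\perp$. The extremal weight profile forced by optimality imposes congruences on the sums $\sum_{i\in \bar H}a_i$ over the $16$-subsets $\bar H$ (hyperplane complements); reducing these modulo $2$ constrains $\ell(G)$ to a single coset of $\langle SO_5\rangle^\perp$, and the claim is that for the listed residues this coset is disjoint from the punctured $R(2,5)$. Because $\HH=\langle SO_5\rangle$ is the shortening of $R(2,5)$, this reduces, after the hyperplane-complement bookkeeping of the first two steps, to a finite verification inside the coset structure of $R(2,5)$, which can be automated. If that disjointness holds the contradiction with $SO_5\ell(G)^T=\0$ is immediate.

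The main obstacle will be the uniformity over all lengths within a residue class: $\ell(G)$ only records multiplicities modulo $2$, while the minimum-distance constraint is genuinely integral, so one must rule out that raising some $a_i$ by an \emph{even} amount---invisible to $\ell(G)$---repairs an otherwise non-self-orthogonal optimal code. I would handle this by splitting off full simplex copies as the ``bulk'' (self-orthogonal, contributing only $\0$ or $\1$ to $\ell(G)$) and arguing that the obstruction lives entirely in the bounded residual part; the delicate point is that no redistribution of the bulk can create the needed parity while preserving the extremal minimum distance, and I expect $n=13$ to be exactly the small length where this bulk/residual separation degenerates, explaining its exclusion. I would scrutinize the residues $14,22,29\pmod{31}$ first, since it is precisely there that I am least confident the coset-disjointness of step three actually holds; if the extremal configuration on those residues admits a parity vector lying \emph{inside} the punctured $R(2,5)$, the argument collapses and one would instead be led to a self-orthogonal optimal code, contradicting the conjecture rather than confirming it.
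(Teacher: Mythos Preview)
The fundamental problem is that you are trying to \emph{prove} a statement that the paper does not prove and that is in fact false. The paper quotes Conjecture~\ref{conj} only to \emph{disprove} it partially: Theorem~\ref{thm17} and Example~\ref{counterex} exhibit explicit self-orthogonal $[n,5,d(n,5)]$ codes for $n\equiv 14,22,29\pmod{31}$ (e.g.\ $[45,5,22]$, $[53,5,26]$, $[60,5,30]$), obtained by running Algorithm~\ref{alg1} on suitable best-known linear codes. So there is no proof of Conjecture~\ref{conj} in the paper to compare your proposal to; what the paper supplies for this statement is a list of counterexamples, and it then replaces the original conjecture by the modified Conjectures~\ref{new_conj1} and~\ref{new_conj2}.

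Your own closing paragraph identifies exactly where your argument must break: the coset-disjointness you need in ``step three'' cannot hold on the residues $14,22,29$, because optimal self-orthogonal codes exist there. Concretely, for those residues there \emph{is} a distance-optimal column multiset whose parity vector $\ell(G)$ lies in the punctured $R(2,5)$, so the parity obstruction $SO_5\,\ell(G)^T\neq\0$ you are hoping for simply does not materialize. The ``bulk/residual'' reduction is sound, but the finite verification you defer to would return ``feasible'' rather than ``infeasible'' on three of the seven residue classes. If you want to salvage something along these lines, the correct target is Conjecture~\ref{new_conj2} (residues $6,13,21,28$ only), which the paper leaves open; your framework is a reasonable starting point for that restricted claim, but it is not what is asserted for the labeled statement here.
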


However, the conjecture is partially disproved. We disproved the conjecture partially by constructing new $[n,5,d(n,5)]$ SO codes for $n \ge 31$ and $n\equiv 14,22,29 \pmod{31}$ using Algorithm \ref{alg1}. We give three of these counterexamples in Example \ref{counterex} and the other counterexamples are presented in the web \cite{Choi2021} due to lack of space. The new parameters of entire counterexamples are listed in Table \ref{table:dso5}.

\begin{theorem}\label{thm17}
	There exist $[n,5,d(n,5)]$ SO codes for $32 \le n \le 256$ and $n\equiv 14,22,29 \pmod{31}$. 
\end{theorem}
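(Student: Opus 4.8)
The statement is an existence claim over a doubly-infinite family of lengths, so the plan is not to exhibit each code individually but to reduce the infinitely many cases to finitely many by a length-extension argument, then verify the finite list computationally via Algorithm \ref{alg1}. First I would fix the three residue classes $n \equiv 14, 22, 29 \pmod{31}$ and recall that for $k=5$ the relevant ambient object is the simplex code $\S_5$ of length $31$, every column of which is a nonzero vector of $\F^5$. The key structural observation is that appending to a generator matrix $G$ a full copy of $H_5$ (all $31$ distinct nonzero columns) changes $\ell(G)$ by the all-ones vector $\1$ of length $31$, hence preserves the parity pattern up to a fixed shift, and increases every nonzero codeword weight by exactly $16$; in particular it sends a self-orthogonal $[n,5]$ code to a self-orthogonal $[n+31,5]$ code and, by the Griesmer-type behavior of $d(n,5)$ on these residue classes, preserves the property of meeting the optimal SO bound $2\lfloor d(n,5)/2\rfloor$. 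This is the standard "adding a simplex" step; the only thing to check is that $d(n+31,5) = d(n,5)+16$ on the three residues in question, which follows from the known formula for $d(n,5)$ (Griesmer bound, which is tight for $k=5$ in this range).

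With that reduction in hand, the second step is to pin down the base cases: it suffices to construct, for each of the three residues, an optimal SO $[n_0,5]$ code at the smallest relevant length $n_0 \le 31 + (\text{something})$, and then all larger $n$ in that class follow by iterating the simplex extension. Concretely I would take $n_0 \in \{45, 53, 60\}$ (the first lengths $\equiv 14,22,29 \pmod{31}$ that are at least $31$ plus a small offset, matching the values excluded in Conjecture \ref{conj}), start from a best-known linear $[n_0,5,d(n_0,5)]$ code or a puncturing of one on at most six coordinates (as described before the theorem), run Algorithm \ref{alg1} to embed it into a self-orthogonal code $\tilde\CC$ of length $n_0 + t$ with $t \le 5$ by Theorem \ref{cor-SO embedding}, and finally \emph{puncture} the extra $t$ columns back off in a way that keeps self-orthogonality and minimum distance $2\lfloor d(n_0,5)/2\rfloor$ — equivalently, search among punctured BKLC codes until Algorithm \ref{alg1} returns an embedding of length exactly $n_0$. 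Since $d_{\mathrm{so}}(n,5) \le 2\lfloor d(n,5)/2\rfloor$ always, once a self-orthogonal code at that distance is exhibited it is automatically optimal SO, and the conjecture's claim $\dso(n,5) = d(n,5)-2$ is contradicted precisely when $d(n,5)$ is even (which is the content of "disproving partially").

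The main obstacle is the base-case construction rather than the induction: Theorem \ref{cor-SO embedding} only guarantees an SO embedding of length $n_0 + 5$, not length exactly $n_0$, so one must either (a) start from a cleverly punctured linear code whose $\ell$-vector already lies in $\HH^\perp$ (by Corollary \ref{lem_so_check}), or (b) argue that among the many linear $[n_0,5]$ codes with $d = d(n_0,5)$ there is one that is self-orthogonal outright. Both are finite searches over the BKLC database and its small puncturings, so the proof is completed by exhibiting the explicit generator matrices — three of which are given in Example \ref{counterex} and the remainder in \cite{Choi2021} — together with the verification that each satisfies $SO_5\,\ell(G)^T = \0$ (Theorem \ref{so_check}) and has the claimed minimum distance. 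I would also double-check the edge of the range: $n = 256$ must actually be reachable, i.e. $256 \equiv 9 \pmod{31}$ is \emph{not} one of our residues, so the largest relevant lengths are $n = 246, 254, 247$ for the three classes, all obtained from the base cases by adding $\S_5$ the appropriate number of times.
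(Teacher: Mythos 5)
Your proposal is correct in substance but takes a genuinely different route from the paper. The paper gives no deductive proof of Theorem~\ref{thm17} at all: the theorem is established purely by exhibition, i.e.\ by running Algorithm~\ref{alg1} on best-known linear codes and their puncturings for every one of the twenty-one lengths in the range, printing three of the resulting generator matrices in Example~\ref{counterex} and deferring the rest to \cite{Choi2021} and Table~\ref{table:dso5}. You instead reduce the family to the three base cases $n_0=45,53,60$ by appending a full simplex generator $H_5$: since every nonzero codeword of the $[31,5]$ simplex code has weight $16$, the juxtaposition $[G\,|\,H_5]$ preserves self-orthogonality and raises every nonzero weight by exactly $16$, and the identity $g_5(d+16)=g_5(d)+31$ for the Griesmer function $g_5(d)=\sum_{i=0}^{4}\lceil d/2^i\rceil$ shows the extended code is again distance-optimal. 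This buys a much shorter certificate (three explicit matrices instead of twenty-one) at the cost of invoking the standard fact that $d(n,5)$ meets the Griesmer bound at $n=45,53,60$, which is easily checked. Two small blemishes, neither fatal: the largest lengths in the three residue classes below $256$ are $231$, $239$, $246$ (not $246$, $254$, $247$ as you wrote, since $256\equiv 8\pmod{31}$); and the idea of puncturing the $t$ added columns ``back off'' an SO embedding does not work as stated --- you correctly fall back on the paper's actual tactic of starting Algorithm~\ref{alg1} from a shorter best-known or punctured code so that the embedding lands at length exactly $n_0$ with the required minimum distance.
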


\begin{example}\label{counterex}
	We give three counterexamples of Conjecture 19 in \cite{Kim2021} as follows.
	\begin{enumerate}[label = (\arabic*)]
		\item Adding three columns on a $[42, 5, 19]$ code,
		we obtain an optimal  $[45, 5, 22]$ SO code with generator matrix
		
		$\begin{smatrix2}
		1 0 0 1 0 1 0 1 1 0 1 0 0 1 0 0 1 0 1 1 1 1 0 1 0 1 0 0 1 0 1 1 0 0 1 0 1 0 1 1
		0 1 0 1 1\\
		0 1 0 1 1 0 0 1 0 1 1 0 0 0 1 0 0 1 1 0 1 0 1 1 0 0 1 0 1 1 0 0 1 0 1 1 0 0 1 0
		1 1 1 1 1\\
		0 0 1 1 1 1 0 0 1 1 1 1 0 1 1 1 0 0 1 1 1 0 0 0 0 1 1 0 0 0 0 1 1 0 0 0 0 1 1 0
		0 0 1 0 1\\
		0 0 0 0 0 0 1 1 1 1 1 1 0 0 0 0 1 1 1 1 1 1 1 1 1 0 0 0 0 0 0 1 1 1 1 1 1 0 0 0
		0 0 1 0 0\\
		0 0 0 0 0 0 0 0 0 0 0 0 1 1 1 1 1 1 1 1 1 1 1 1 1 1 1 1 1 1 1 0 0 0 0 0 0 0 0 0
		0 0 1 1 1
		\end{smatrix2}$.\\
		
		\item Adding three columns on a $[50, 5, 23]$ code,
		we obtain an optimal  $[53, 5, 26]$ SO code with generator matrix
		
		$\begin{smatrix2}
		1 0 0 1 0 1 1 0 0 1 1 0 1 0 0 1 1 1 0 1 1 0 1 1 1 1 0 1 0 0 1 1 0 0 1 0 1 1 0 0
		1 0 1 1 0 0 1 1 0 1 0 0 0\\
		0 1 0 1 0 1 0 1 0 1 0 1 0 1 0 1 0 0 1 1 0 1 1 1 0 1 0 1 0 1 0 1 0 1 0 1 0 1 0 1
		0 1 0 1 0 1 0 1 0 1 0 0 0\\
		0 0 1 1 0 0 1 1 1 1 0 0 1 1 0 0 1 0 0 1 1 1 0 1 1 0 0 1 1 0 0 0 0 1 1 0 0 1 1 0
		0 1 1 0 0 0 0 1 1 0 0 1 1\\
		0 0 0 0 1 1 1 1 1 1 1 1 0 0 0 0 0 1 1 1 1 1 1 0 0 0 0 1 1 1 1 1 1 1 1 0 0 0 0 0
		0 1 1 1 1 1 1 1 1 0 1 0 1\\
		0 0 0 0 0 0 0 0 0 0 0 0 0 0 1 1 1 1 1 1 1 1 1 1 1 1 1 1 1 1 1 1 1 1 1 1 1 0 0 0
		0 0 0 0 0 0 0 0 0 0 1 1 1
		\end{smatrix2}$.\\

		\item Adding two columns on a $[58, 5, 29]$ code,	
		we obtain an optimal  $[60, 5, 30]$ SO code with generator matrix
		
		$\begin{smatrix2}
		1 0 0 1 0 1 1 0 0 1 1 0 1 0 0 1 1 0 0 1 0 1 1 0 0 1 1 0 1 0 0 1 1 0 1 0 0 1 1 0
		0 1 0 1 1 0 0 1 1 0 1 0 0 1 1 0 0 1 0 1\\
		0 1 0 1 0 1 0 1 0 1 0 1 0 1 0 1 0 1 0 1 0 1 0 1 0 1 0 1 0 1 0 1 0 1 0 1 0 1 0 1
		0 1 0 1 0 1 0 1 0 1 0 1 0 1 0 1 0 1 1 0\\
		0 0 1 1 0 0 1 1 0 0 1 1 0 0 1 1 1 1 0 0 1 1 0 0 1 1 0 0 1 1 1 1 0 0 1 1 0 0 1 1
		0 0 1 1 0 0 0 0 1 1 0 0 1 1 0 0 1 1 1 1\\
		0 0 0 0 1 1 1 1 0 0 0 0 1 1 1 1 1 1 1 1 0 0 0 0 1 1 1 1 0 0 0 0 0 0 1 1 1 1 0 0
		0 0 1 1 1 1 1 1 1 1 0 0 0 0 1 1 1 1 0 0\\
		0 0 0 0 0 0 0 0 1 1 1 1 1 1 1 1 1 1 1 1 1 1 1 1 0 0 0 0 0 0 0 0 0 0 0 0 0 0 1 1
		1 1 1 1 1 1 1 1 1 1 1 1 1 1 0 0 0 0 0 0
		\end{smatrix2}$.

	\end{enumerate}	
\end{example}

Based on Theorem \ref{thm17}, we give modified conjectures as follows.

\begin{conjecture}\label{new_conj1}
	For $n\ge 32$, if $n\equiv 14,22,29 \pmod{31}$, then $\dso(n,5) = d(n,5)$, i.e., there exist $[n,5,d(n,5)]$ SO codes.
\end{conjecture}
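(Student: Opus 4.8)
I would prove the statement by combining a small finite base with one periodic construction, obtaining optimal self-orthogonal $[n,5]$ codes for every admissible $n\ge 32$ (so in particular for $32\le n\le 256$). The starting observation is that $45,53,60$ are the three smallest integers $\ge 32$ lying in the residue classes $14,22,29\pmod{31}$, and that Example~\ref{counterex} already produces optimal self-orthogonal codes of parameters $[45,5,22]$, $[53,5,26]$, $[60,5,30]$ via Algorithm~\ref{alg1} (these are optimal: the Griesmer bound gives $d(45,5)=22$, $d(53,5)=26$, $d(60,5)=30$, all even). So it suffices to show that from any optimal self-orthogonal $[n,5]$ code with $n$ in one of these classes one can build an optimal self-orthogonal $[n+31,5]$ code in the same class; iterating from the three base codes then covers all admissible $n\ge 32$.

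The construction is juxtaposition with the $[31,5,16]$ simplex code $\S_5$ generated by $H_5$. Since each nonzero vector of $\F^{5}$ occurs once as a column of $H_5$, we have $\ell(H_5)=\1$, and each coordinate of $SO_5\,\1^{T}$ equals $16$ or $8$, hence vanishes mod $2$; thus $SO_5\,\1^{T}=\0$ and, by Theorem~\ref{so_check}, $\S_5$ is self-orthogonal. Let $\CC$ be a self-orthogonal $[n,5]$ code with generator matrix $G$ and $d(\CC)=d(n,5)$, and put $\widetilde G=[\,G\mid H_5\,]$. Then $\widetilde G$ has rank $5$ (its right block does), the code $\widetilde\CC$ generated by it is self-orthogonal because $\widetilde G\,\widetilde G^{T}=GG^{T}+H_5H_5^{T}=\0$, and for every nonzero $v\in\F^{5}$ the vector $vH_5$ is a nonzero simplex codeword of weight exactly $16$, so $\mathrm{wt}(v\widetilde G)=\mathrm{wt}(vG)+16$ and $d(\widetilde\CC)=d(n,5)+16$. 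Since $n+31\equiv n\pmod{31}$, the residue class is preserved.

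The step that needs care --- and which I expect to be the main obstacle --- is certifying that $d(\widetilde\CC)=d(n,5)+16$ equals $d(n+31,5)$, so that $\widetilde\CC$ is an optimal self-orthogonal code and $\dso(n+31,5)=d(n+31,5)$. This is a statement about the ordinary minimum-distance function only: $d(n+31,5)=d(n,5)+16$ for all admissible $n\ge 32$. The inequality ``$\ge$'' is $\widetilde\CC$; for ``$\le$'' one uses that the Griesmer function $g(d):=\sum_{i=0}^{4}\lceil d/2^{i}\rceil$ satisfies $g(d+16)=g(d)+31$, together with the known fact that for binary $5$-dimensional codes the Griesmer bound is attained for all $d$ beyond a small initial range --- so in that range $d(n,5)$ is the largest $d$ with $g(d)\le n$. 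Then $d(n,5)+1$ being unattainable at length $n$ forces $g(d(n,5)+1)>n$, hence $g(d(n,5)+17)=g(d(n,5)+1)+31>n+31$, so no $[\,n+31,5,d(n,5)+17\,]$ code exists and $d(n+31,5)\le d(n,5)+16$. The base lengths $45,53,60$ already lie well inside the Griesmer regime and carry even distances, so evenness and optimality propagate and the induction closes. Thus the only non-computational ingredient beyond the simplex juxtaposition is the known complete determination of $d(n,5)$ (equivalently, Griesmer-optimality for $k=5$ in the relevant range); for the bare finite statement $32\le n\le 256$ one may instead simply run Algorithm~\ref{alg1} on suitably punctured best-known linear codes as in Example~\ref{counterex}.
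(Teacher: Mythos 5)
The statement you are proving is stated in the paper only as a \emph{conjecture}: the authors offer no proof of it, and the closest result they establish is Theorem~\ref{thm17}, which covers only $32\le n\le 256$ and is obtained purely computationally by running Algorithm~\ref{alg1} on best-known linear codes and their punctured codes. Your approach is therefore genuinely different and more ambitious: a structural induction by juxtaposition with the simplex code $\S_5$. The construction steps themselves are sound. The base codes $[45,5,22]$, $[53,5,26]$, $[60,5,30]$ are exactly the three smallest admissible lengths; $\ell(H_5)=\1$ and the rows of $SO_5$ have weights $16$ or $8$, so $SO_5\1^{T}=\0$ and $\S_5$ is self-orthogonal by Theorem~\ref{so_check}; $\widetilde G=[\,G\mid H_5\,]$ then generates a self-orthogonal code of dimension $5$ with every nonzero codeword weight increased by exactly $16$; and $g(d+16)=g(d)+31$ for the $k=5$ Griesmer function since $16/2^{i}$ is an integer for $i\le 4$. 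If the induction closes, it proves the conjecture for \emph{all} $n\ge 32$, which the paper does not do.

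The genuine gap is the lemma $d(n+31,5)=d(n,5)+16$, which you correctly flag but do not actually establish. Your argument for the upper bound $d(n+31,5)\le d(n,5)+16$ requires that $d(n,5)$ equal the Griesmer-maximal value $\max\{d: g(d)\le n\}$ at every length in these residue classes from $45$ on; you invoke this as ``the known fact that the Griesmer bound is attained for all $d$ beyond a small initial range'' without a reference or a statement of where that range ends. This is the entire non-computational content of the proof: if there were even one length $n_0\equiv 14,22,29\pmod{31}$ at which $d(n_0,5)$ falls below the Griesmer value, the identity $d(n+31,5)=d(n,5)+16$ could fail there and the induction would produce a code that is self-orthogonal but not optimal, breaking the chain for all larger lengths in that class. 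The determination of $d(n,5)$ for all $n$ is indeed classical and your claim is true, but as written the proof is incomplete without citing it precisely (for $45\le n\le 246$ one can instead read Griesmer attainment off Table~\ref{table:dso5} of the paper itself, which however only recovers the finite range of Theorem~\ref{thm17}). A secondary, minor point: you should state explicitly that evenness of $d(n,5)$ in these classes (needed so that the SO code can attain $d(n,5)$ rather than $d(n,5)-1$) follows by induction from the even base values $22,26,30$ under the $+16$ step, rather than leaving it implicit.
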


\begin{conjecture}\label{new_conj2}
	If $n=14, 21,22,28,29$ or if $n \ge 32$ and $n\equiv 6, 13,21,28 \pmod{31}$, then $\dso(n,5) = d(n,5) - 2$, i.e., there are no $[n,5,d(n,5)]$ SO codes.
\end{conjecture}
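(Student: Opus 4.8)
The plan is to translate the non-existence claim into the column-multiplicity language of Section 3 and then to show that the parity pattern forced by distance-optimality is incompatible with self-orthogonality. Represent a dimension-$5$ code (up to a permutation of columns) by its multiplicity vector $\mathbf m=(m_1,\dots,m_{31})$, where $m_i$ is the number of columns equal to $\mathsf h_i$ and $\sum_i m_i=n$; the weight of the codeword indexed by $a\in\F^5\setminus\{\0\}$ is $w_a=\sum_{i:\,\mathsf h_i\cdot a=1}m_i$, so the minimum distance is $\min_{a\neq\0}w_a$. By Corollary \ref{lem_so_check} together with Theorem \ref{thm-SO-RM}, the code is self-orthogonal exactly when the reduction $\ell=(m_1,\dots,m_{31})\bmod 2$ lies in $C:=\HH^\perp$, the puncture of $R(2,5)$ on the first coordinate, a $[31,16]$ code. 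Since $\dso(n,5)$ is even, the assertion $\dso(n,5)=d(n,5)-2$ already forces $d(n,5)$ to be even for the listed $n$, and the target reduces to showing that no multiplicity vector with $\min_{a\neq\0}w_a=d(n,5)$ and $\ell\in C$ exists. For the finitely many small lengths $n\in\{14,21,22,28,29\}$ this is a bounded feasibility problem, which I would settle by directly classifying the $\ell\in C$ compatible with $\sum_i m_i=n$ and full rank, i.e.\ a finite search.

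For the infinite families $n\ge 32$ with $n\equiv 6,13,21,28\pmod{31}$ I would exploit the structure of distance-optimal codes. When $d(n,5)$ is governed by the Griesmer bound, the Solomon--Stiffler/Belov description realizes an optimal $[n,5]$ code as $t$ copies of the simplex $\SS_5$ with a union of subspaces of $\F^5$ deleted; concretely every $m_i\in\{t-1,t\}$ and the positions with $m_i=t-1$ form the support of the indicator of a union of subspaces, so $\ell$ equals $t\,\1$ plus the $\bmod\,2$ indicator of that union. The indicator of a codimension-$c$ subspace of $\F^5$ is the evaluation of a product of $c$ affine forms, hence a codeword of $R(c,5)$ (of length $32$); since $\1\in C$, the membership $\ell\in C$ forces that indicator, before puncturing, to lie in $R(2,5)$, i.e.\ to have degree $\le 2$. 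I would show that for the residues $6,13,21,28$ the deficiency of $d(n,5)$ forces the deleted configuration to contain a codimension-$\ge 3$ subspace, contributing a degree-$3$ term that cannot cancel, so $\ell\notin C$ for \emph{every} optimal $\mathbf m$; by contrast for $14,22,29$ the deletion uses only codimension $\le 2$ pieces, which is exactly why Theorem \ref{thm17} produces self-orthogonal optima there.

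To avoid an unbounded case analysis I would reduce each residue class to finitely many base lengths by a periodicity argument. Adding one full copy of $\SS_5$ increments every $m_i$ by $1$, so $n\mapsto n+31$, raises every weight $w_a$ by exactly $16$, and changes $\ell$ by $\1\in C$; since $d(n+31,5)=d(n,5)+16$ in these classes, optimality and self-orthogonality are preserved simultaneously, and conversely an optimal self-orthogonal code at $n+31$ all of whose multiplicities are positive descends to one at $n$. Hence existence of an optimal self-orthogonal code is essentially periodic modulo $31$: proving the obstruction at the smallest base length of each class (by the degree argument above or by computer) propagates non-existence upward. Finally, to obtain equality rather than merely $\dso(n,5)\le d(n,5)-2$, I would exhibit matching self-orthogonal $[n,5,d(n,5)-2]$ codes using the shortest SO embedding of Algorithm \ref{alg1} applied to a best-known $[n,5]$ code, as in Example \ref{counterex}.

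The main obstacle is the structural, downward half of the argument: ruling out self-orthogonality requires the Reed--Muller obstruction to hold for the multiplicity vector of \emph{every} distance-optimal $[n,5]$ code, not merely for one convenient construction. The Solomon--Stiffler description is known to capture all optimal codes only when the Griesmer bound is met and the deleted anticode is essentially unique; controlling residues where optimal codes are not Griesmer-optimal, or where the deleted subspaces admit several inequivalent shapes, is where the delicate work lies. This is also precisely what makes the descent step nontrivial, since subtracting a simplex requires all $m_i\ge 1$, a condition that must itself be extracted from the classification of optimal multiplicity vectors.
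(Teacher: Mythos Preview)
The paper does not prove this statement: Conjecture~\ref{new_conj2} is offered precisely as a conjecture, supported only by the computational evidence behind Theorem~\ref{thm17} and Table~\ref{table:dso5}, and no argument for it appears anywhere in the text. So there is no ``paper's own proof'' to compare your proposal against; you are attempting to settle an open problem that the authors explicitly leave unresolved.

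As for the proposal itself, the framework is sound---the translation into multiplicity vectors, the use of Corollary~\ref{lem_so_check} and Theorem~\ref{thm-SO-RM} to recast self-orthogonality as membership of $\ell$ in the punctured $R(2,5)$, and the simplex-shift periodicity $n\mapsto n+31$ are all correct and natural---but what you have is a strategy, not a proof, and you have already put your finger on the real gap. The obstruction argument requires you to show that \emph{every} multiplicity vector realizing $d(n,5)$ has $\ell\notin C$, which in turn presupposes a complete structural description of all optimal $[n,5]$ codes in the residue classes $6,13,21,28\pmod{31}$. The Solomon--Stiffler picture you invoke only covers Griesmer-tight parameters and even then need not exhaust all optima up to the $\ell$-pattern; you would need to prove uniqueness (or at least uniformity of the degree-$3$ obstruction) across all optimal codes, and that is not supplied. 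The descent half of the periodicity argument has the same defect: subtracting a simplex is only legitimate when every $m_i\ge 1$, and establishing that for an arbitrary optimal code again requires the missing classification. Until those two points are closed, the proposal remains a plausible outline rather than a proof, which is consistent with the paper's decision to state the claim as a conjecture.
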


\begin{table}
	\begin{center}
		\begin{tabular}{c|c|c|c}
			\multirow{2}{*}{$n$}&	\multirow{2}{*}{$\pmod{31}$}	&\multicolumn{2}{c}{$\dso(n,5)$}\\\cline{3-4}
			
			&& Known[ref] 	& Our result\\\hline\hline
			$45$& 14&	$\le$ 22\cite{Griesmer1960} 		    &${22}$	\\\hline
			$53$&22&	$\le$26\cite{Griesmer1960}		    &	${26}$	\\\hline
			$60$&29&$\le$30\cite{Griesmer1960}		&	${30}$	\\\hline
			$76$&14&$\le$38\cite{Griesmer1960}	    &	${38}$	\\\hline
			$84$&22&$\le$42\cite{Griesmer1960}	&	${42}$		\\\hline
			$91$&29&$\le$46\cite{Griesmer1960}	   &	${46}$		\\\hline
			$107$&14&$\le$54\cite{Griesmer1960}	   &	${54}$		\\\hline
			$115$&22&$\le$58\cite{Griesmer1960}	    &${58}$	\\\hline
			$122$&29&$\le$62\cite{Griesmer1960}      &	${62}$			\\\hline
			$138$&14&$\le$70\cite{Griesmer1960}	    &${70}$			\\\hline
			$146$&22&	$\le$74\cite{Griesmer1960}    &${74}$     \\\hline
			$153$&29&	$\le$78\cite{Griesmer1960}	   &${78}$  	\\\hline
			$169$&14&	$\le$86\cite{Griesmer1960}		&	${86}$	\\\hline
			$177$&22&	$\le$90\cite{Griesmer1960}	   &	${90}$	\\\hline
			$184$&29&	$\le$94\cite{Griesmer1960}		&	${94}$		\\\hline
			$200$&14&	$\le$102\cite{Griesmer1960}	    &	${102}$		\\\hline
			$208$&22&$\le$106\cite{Griesmer1960}		    &	${106}$		\\\hline
			$215$&29&	$\le$110\cite{Griesmer1960}		    &	${110}$		\\\hline
			$231$&14&	$\le$118\cite{Griesmer1960}  	    &${118}$			\\\hline
			$239$&22&$\le$122\cite{Griesmer1960}   &	${122}$			\\\hline
			$246$&29&$\le$126\cite{Griesmer1960}    &	${126}$
			
		\end{tabular}
	\end{center}
	\caption{New parameters of optimal $\dso(n,5)$ for $45\le n \le 256$ and $k=5$}\label{table:dso5}
\end{table}

\subsection{optimal self-orthogonal codes of dimension 6}
In this section, we introduce new parameters of optimal $[n,6]$ SO codes for lengths $n \le 256$. These SO codes are constructed using Algorithm \ref{alg1}.
There are examples of $[n,6]$ SO codes in \cite{Boukllieve2006}. In \cite{Boukllieve2006}, optimal $[n,6]$ SO codes of lengths only up to 40 are investigated and optimal $[n,6]$ SO codes of lengths $n \le 29$ and $n=32, 33, 34, 35, 36, 40$ are obtained as results. 

Besides optimal $[n,6]$ SO codes from  \cite{Boukllieve2006} and best-known linear codes($BKLC$) database of MAGMA\cite{Magma1994}, we successfully constructed many new optimal $[n,6]$ SO codes of lengths up to 256 starting from $BKLC$ database) of MAGMA\cite{Magma1994} and their punctured codes on at most six columns. The new parameters of SO codes of lengths up to 100 are listed in Table \ref{table:dso}, and all the examples of lengths up to 256 are presented in the web \cite{Choi2021}. In Table \ref{table:dso}, we denote the $\dso(n,6)$ by superscript $*$ if $\dso(n,6)=2\lfloor d(n,6)/2 \rfloor$ is confirmed by our construction result. 

Our computational result gives the following Theorem.

\begin{theorem} 
	For lengths $41 \le n \le 256$, if $n \nequiv 7, 14, 22, 29, 38, 45, 53, 60 \pmod{63}$ and $n \ne 46, 54, 61$, then $\dso(n,6) =  2\lfloor d(n,6)/2 \rfloor$.
\end{theorem}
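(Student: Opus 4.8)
The plan is to reduce the statement to an explicit, verifiable construction and then carry that construction out length by length. First I would isolate the two inequalities at issue. By the remark following the definition of $\dso$ in Section~2, any self-orthogonal $[n,6]$ code has minimum distance at most $2\lfloor d(n,6)/2\rfloor$, and a self-orthogonal code attaining this value is automatically an optimal SO code with $\dso(n,6)=2\lfloor d(n,6)/2\rfloor$. So for each admissible $n$ it suffices to produce one self-orthogonal $[n,6]$ code of minimum distance at least $d^{\star}:=2\lfloor d(n,6)/2\rfloor$.

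Next comes the construction. Fix such an $n$ and choose a linear $[m,6]$ code $\CC_0$ with generator matrix $G_0$ — either a best-known linear code from the Magma BKLC database or the puncturing of such a code on at most six coordinates — with $m\le n$ and $d(\CC_0)\ge d^{\star}$. Applying Algorithm~\ref{alg1} to $G_0$ appends $t:=\mathrm{wt}(err_{\s})$ columns, where $\s=SO_6\,\ell(G_0)^{T}$, and by part~(2) of Theorem~\ref{thm-SO-RM} (the covering radius relevant for $k=6$ equals $k+1=7$) we have $t\le 7$; by Theorem~\ref{so_check} the resulting $[m+t,6]$ code is self-orthogonal. The key monotonicity remark is that appending columns to a generator matrix never decreases the minimum distance, since $\mathrm{wt}(\x\,[G_0\mid S])=\mathrm{wt}(\x G_0)+\mathrm{wt}(\x S)\ge \mathrm{wt}(\x G_0)$ for every $\x$ and $G_0$ still has full rank $6$; hence the new code has minimum distance $\ge d(\CC_0)\ge d^{\star}$, and if $m+t=n$ we are done. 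When $m+t<n$ with the deficit even, one may further append pairs of equal columns $\mathsf{h}_i,\mathsf{h}_i$: this leaves $\ell(G_0)\bmod 2$ unchanged, hence preserves self-orthogonality by Theorem~\ref{so_check}, and does not decrease the minimum distance; in general one tunes $m$ and the choice of punctured coordinates so that the algorithm lands on length exactly $n$. A useful organizing principle behind why a suitable starting code should exist is that the $q$-fold juxtaposition $[\,H_6\mid\cdots\mid H_6\,]$ of the simplex code $\S_6=[63,6,32]$ is itself self-orthogonal — indeed $\ell(H_6)=\1$, and each row $r_i\otimes r_j$ of $SO_6$ has even weight ($2^{5}$ if $i=j$, $2^{4}$ if $i\ne j$), so $SO_6\1^{T}=\0$ and Theorem~\ref{so_check} applies — giving an SO $[63q,6,32q]$ code that meets the bound; lengths between multiples of $63=2^{6}-1$ are filled in by juxtaposing a short optimal SO code and, if needed, running the embedding algorithm.

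The remaining content is computational, and this is the hard part: one must certify, for every $n$ with $41\le n\le 256$ satisfying $n\not\equiv 7,14,22,29,38,45,53,60\pmod{63}$ and $n\ne 46,54,61$, that a starting code $\CC_0$ as above exists and that the algorithm outputs a length-$n$ self-orthogonal code of minimum distance $d^{\star}$. I would carry this out by a Magma search over the BKLC codes of the relevant nearby lengths and their puncturings on at most six coordinates, checking the output length and minimum distance in each case; representative generator matrices for $n\le 100$ appear in Table~\ref{table:dso} and the complete list at \cite{Choi2021}. The excluded residues $n\equiv 7,14,22,29,38,45,53,60\pmod{63}$ and the exceptions $n=46,54,61$ enter precisely because there the best linear code of any admissible length in the reachable window has minimum distance one below $d^{\star}$ — a Griesmer-type defect around multiples of $2^{6}-1$ — so the method provably cannot attain the bound; outside them the case-by-case verification succeeds, which establishes the theorem.
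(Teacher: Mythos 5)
Your proposal takes essentially the same route as the paper: the paper proves this theorem purely computationally, by running the shortest SO embedding algorithm (Algorithm~\ref{alg1}) on BKLC codes and their puncturings on at most six coordinates and verifying that for each admissible $n$ the output is a self-orthogonal $[n,6]$ code of minimum distance $2\lfloor d(n,6)/2\rfloor$, with witnesses recorded in Table~\ref{table:dso} and the online database \cite{Choi2021}; your reduction (upper bound from evenness of $\dso$, lower bound from an explicit code, monotonicity of minimum distance under appending columns) is exactly the implicit logical skeleton. The one inaccuracy is your closing assertion that at the excluded residues the method \emph{provably} cannot attain the bound --- the paper makes no such claim and the theorem does not need it, since those lengths are simply outside its scope.
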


\begin{table}
	
	\begin{center}
		
		\begin{tabular}{c|r|c||c|r|c}
			
			\multirow{2}{*}{$n$}	&\multicolumn{2}{c||}{$\dso(n,6)$}	&	\multirow{2}{*}{$n$}&	\multicolumn{2}{c}{$\dso(n,6)$}\\\cline{2-3}\cline{5-6}

			& Known[ref]	& Our result	&   &Known[ref]	&Our result\\\hline\hline
			$30$&$\le12~$\cite{Boukllieve2006}  &${12}$     &$141$	&$\le70~$\cite{Griesmer1960}    &	${70^*}$ 	    \\\hline
			$31$&$\le12~$\cite{Boukllieve2006}  &$\le12~$   &$142$	&$\le70~$\cite{Griesmer1960}	&	${70^*}$		\\\hline
			$37$&$\le16~$\cite{Boukllieve2006}  &${16^*}$   &$147$	&$\le72~$\cite{Griesmer1960}	&	${72^*}$		\\\hline
			$38$&$\le16~$\cite{Boukllieve2006}  & $16~$    	&$148$  &$\le74~$\cite{Griesmer1960}	&	${72}$			\\\hline
			$39$&$\le16~$\cite{Boukllieve2006}  &${16}$ 	&$149$	&$\le74~$\cite{Griesmer1960}	&	${74^*}$ 	    \\\hline
			$41$&$\le18~$\cite{Griesmer1960}    &${18^*}$   &$150$	&$\le74~$\cite{Griesmer1960}	&	${74^*}$	\\\hline
			$44$&$\le20~$\cite{Griesmer1960}	&${20^*}$  	&$154$	&$\le76~$\cite{Griesmer1960}	&	${76^*}$	\\\hline
			$45$&$\le22~$\cite{Griesmer1960}	&${20}$ 	&$155$	&$\le78~$\cite{Griesmer1960}	&	${76}$		\\\hline
			$46$&$\le22~$\cite{Griesmer1960}	&${20}$ 	&$156$  &$\le78~$\cite{Griesmer1960}	&	${78^*}$ 			\\\hline
			$47$&$\le22~$\cite{Griesmer1960}	&${22^*}$  	&$157$  &$\le78~$\cite{Griesmer1960}	&	${78^*}$ 			\\\hline
			$52$&$\le24~$\cite{Griesmer1960}	&${24^*}$   &$163$	&$\le80~$\cite{Griesmer1960}	&	${80^*}$	\\\hline
			$53$&$\le26~$\cite{Griesmer1960}	&${24}$ 	&$164$	&$\le82~$\cite{Griesmer1960}    &	${80}$	\\\hline
			$54$&$\le26~$\cite{Griesmer1960}	&${24}$ 	&$165$	&$\le82~$\cite{Griesmer1960}    &	${82^*}$	\\\hline
			$55$&$\le26~$\cite{Griesmer1960}	&${26^*}$	&$166$	&$\le82~$\cite{Griesmer1960}    &	${82^*}$		\\\hline
			$59$&$\le28~$\cite{Griesmer1960}	&${28^*}$	&$170$	&$\le84~$\cite{Griesmer1960}	&	${84^*}$		\\\hline
			$60$&$\le30~$\cite{Griesmer1960}    &${28}$     &$171$  &$\le86~$\cite{Griesmer1960}	&	${84}$		\\\hline
			$61$&$\le30~$\cite{Griesmer1960}	&${28}$     &$172$  &$\le86~$\cite{Griesmer1960}	&	${86^*}$		\\\hline
			$62$&$\le30~$\cite{Griesmer1960}    &${30^*}$	&$173$	&$\le86~$\cite{Griesmer1960}    &	${86^*}$	\\\hline
			$69$&$\le32~$\cite{Griesmer1960}	&${32^*}$ 	&$178$	&$\le88~$\cite{Griesmer1960}    &	${88^*}$		\\\hline
			$70$&$\le34~$\cite{Griesmer1960}    &${32}$     &$179$	&$\le90~$\cite{Griesmer1960}    &	${88}$ 	    \\\hline
			$71$&$\le34~$\cite{Griesmer1960}    &${34^*}$   &$180$	&$\le90~$\cite{Griesmer1960}	&	${90^*}$		\\\hline
			$72$&$\le34~$\cite{Griesmer1960}    &${34^*}$   &$181$	&$\le90~$\cite{Griesmer1960}	&	${90^*}$		\\\hline
			$76$&$\le36~$\cite{Griesmer1960}    &${36^*}$  	&$185$  &$\le92~$\cite{Griesmer1960}	&	${92^*}$			\\\hline
			$77$&$\le38~$\cite{Griesmer1960}    &${36}$ 	&$186$	&$\le94~$\cite{Griesmer1960}	&	${92}$ 	    \\\hline
			$78$&$\le38~$\cite{Griesmer1960}    &${38^*}$   &$187$	&$\le94~$\cite{Griesmer1960}	&	${94^*}$	\\\hline
			$79$&$\le38~$\cite{Griesmer1960}	&${38^*}$  	&$188$	&$\le94~$\cite{Griesmer1960}	&	${94^*}$	\\\hline
			$84$&$\le40~$\cite{Griesmer1960}	&${40^*}$ 	&$195$	&$\le96~$\cite{Griesmer1960}	&	${96^*}$		\\\hline
			$85$&$\le42~$\cite{Griesmer1960}	&${40}$ 	&$196$  &$\le98~$\cite{Griesmer1960}	&	${96}$ 			\\\hline
			$86$&$\le42~$\cite{Griesmer1960}	&${42^*}$  	&$197$  &$\le98~$\cite{Griesmer1960}	&	${98^*}$ 			\\\hline
			$87$&$\le42~$\cite{Griesmer1960}	&${42^*}$   &$198$	&$\le98~$\cite{Griesmer1960}	&	${98^*}$	\\\hline
			$89$&$\le44~$\cite{Griesmer1960}	&${44^*}$ 	&$202$	&$\le100~$\cite{Griesmer1960}    &	${100^*}$	\\\hline
			$90$&$\le44~$\cite{Griesmer1960}	&${44^*}$ 	&$203$	&$\le102~$\cite{Griesmer1960}    &	${100}$	\\\hline
			$91$&$\le44~$\cite{Griesmer1960}	&${44^*}$	&$204$	&$\le102~$\cite{Griesmer1960}    &	${102^*}$		\\\hline
			$92$&$\le46~$\cite{Griesmer1960}	&${44}$	    &$205$	&$\le102~$\cite{Griesmer1960}	&	${102^*}$		\\\hline
			$93$&$\le46~$\cite{Griesmer1960}    &${46^*}$   &$209$  &$\le104~$\cite{Griesmer1960}	&	${104^*}$		\\\hline
			$94$&$\le46~$\cite{Griesmer1960}	&${46^*}$   &$210$  &$\le104~$\cite{Griesmer1960}	&	${104^*}$		\\\hline
			$100$&$\le48~$\cite{Griesmer1960}   &${48^*}$	&$211$	&$\le106~$\cite{Griesmer1960}    &	${104}$	\\\hline
			$101$&$\le50~$\cite{Griesmer1960}	&${48}$ 	&$212$	&$\le106~$\cite{Griesmer1960}    &	${106^*}$		\\\hline
			$102$&$\le50~$\cite{Griesmer1960}   &${50^*}$   &$213$	&$\le106~$\cite{Griesmer1960}    &	${106^*}$ 	    \\\hline
			$103$&$\le50~$\cite{Griesmer1960}   &${50^*}$   &$217$	&$\le108~$\cite{Griesmer1960}	&	${108^*}$		\\\hline
			$107$&$\le52~$\cite{Griesmer1960}   &${52^*}$   &$218$	&$\le110~$\cite{Griesmer1960}	&	${108}$		\\\hline
			$108$&$\le54~$\cite{Griesmer1960}   &${52}$    	&$219$  &$\le110~$\cite{Griesmer1960}	&	${110^*}$			\\\hline
			$109$&$\le54~$\cite{Griesmer1960}   &${54^*}$ 	&$220$	&$\le110~$\cite{Griesmer1960}	&	${110^*}$ 	    \\\hline
			$110$&$\le54~$\cite{Griesmer1960}   &${54^*}$   &$226$	&$\le112~$\cite{Griesmer1960}	&	${112^*}$	\\\hline
			$115$&$\le56~$\cite{Griesmer1960}	&${56^*}$  	&$227$	&$\le114~$\cite{Griesmer1960}	&	${112}$	\\\hline
			$116$&$\le58~$\cite{Griesmer1960}	&${56}$ 	&$228$	&$\le114~$\cite{Griesmer1960}	&	${114^*}$		\\\hline
			$117$&$\le58~$\cite{Griesmer1960}	&${58^*}$ 	&$229$  &$\le114~$\cite{Griesmer1960}	&	${114^*}$ 			\\\hline
			$118$&$\le58~$\cite{Griesmer1960}	&${58^*}$  	&$233$  &$\le116~$\cite{Griesmer1960}	&	${116^*}$ 			\\\hline
			$122$&$\le60~$\cite{Griesmer1960}	&${60^*}$   &$234$	&$\le118~$\cite{Griesmer1960}	&	${116}$	\\\hline
			$123$&$\le62~$\cite{Griesmer1960}	&${60}$ 	&$235$	&$\le118~$\cite{Griesmer1960}    &	${118^*}$	\\\hline
			$124$&$\le62~$\cite{Griesmer1960}	&${62^*}$ 	&$236$	&$\le118~$\cite{Griesmer1960}    &	${118^*}$	\\\hline
			$125$&$\le62~$\cite{Griesmer1960}	&${62^*}$	&$241$	&$\le120~$\cite{Griesmer1960}    &	${120^*}$		\\\hline
			$132$&$\le64~$\cite{Griesmer1960}	&${64^*}$	&$242$	&$\le122~$\cite{Griesmer1960}	&	${120}$		\\\hline
			$133$&$\le66~$\cite{Griesmer1960}   &${64}$     &$243$  &$\le122~$\cite{Griesmer1960}	&	${122^*}$		\\\hline
			$134$&$\le66~$\cite{Griesmer1960}	&${66^*}$   &$244$  &$\le122~$\cite{Griesmer1960}	&	${122^*}$		\\\hline
			$135$&$\le66~$\cite{Griesmer1960}   &${66^*}$	&$248$	&$\le124~$\cite{Griesmer1960}    &	${124^*}$	\\\hline
			$139$&$\le68~$\cite{Griesmer1960}	&${68^*}$ 	&$249$	&$\le126~$\cite{Griesmer1960}    &	${124}$		\\\hline
			$140$&$\le70~$\cite{Griesmer1960}   &${68}$	    &$251$	&$\le126~$\cite{Griesmer1960}    &	${126^*}$	\\\hline

		\end{tabular}
		\caption{New parameters of $\dso(n,6)$ up to lengths $\le 256$}\label{table:dso}

	\end{center}
\end{table}

In \cite{Calderban1996}, a construction method of quantum codes from binary linear codes was introduced as following Theorem.
\begin{theorem}[Theorem 9 in \cite{Calderban1996}]\label{quantum}
	Let $\CC_1 \subseteq \CC_2$ be binary linear codes. By taking $\CC= \omega \CC_1 + \bar{\omega}\CC_2^\perp$, we obtain an $[[n, k_2-k_1,d]]$ code, where $$d=\min\{dist(\CC_2\backslash \CC_1),dist(\CC_1^\perp\backslash\CC_2^\perp) \}.$$
\end{theorem}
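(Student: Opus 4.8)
The plan is to prove the statement within the standard additive-code-over-$\mathbb{F}_4$ (stabilizer) framework, which is exactly the setting of the cited construction. Fix a primitive element $\omega$ of $\mathbb{F}_4$ with $\overline{\omega}=\omega^2=\omega+1$, and recall the trace map $\mathrm{Tr}\colon\mathbb{F}_4\to\mathbb{F}_2$, $\mathrm{Tr}(x)=x+x^2$, together with the trace-Hermitian form $\mathbf{u}\ast\mathbf{v}=\sum_{i=1}^n\mathrm{Tr}(u_i\overline{v_i})$ on $\mathbb{F}_4^n$. The key background fact I would invoke is the correspondence between quantum stabilizer codes and additive $\mathbb{F}_4$-codes: an additive code $\CC\subseteq\mathbb{F}_4^n$ with $|\CC|=2^{\,n-K}$ and $\CC\subseteq\CC^{\perp_H}$ (self-orthogonality under $\ast$) yields an $[[n,K,d]]$ quantum code whose distance equals the minimum Hamming weight of $\CC^{\perp_H}\setminus\CC$. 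Thus the theorem reduces to three purely classical verifications about $\CC=\omega\CC_1+\overline{\omega}\CC_2^{\perp}$: that it is self-orthogonal, that it has the right size, and that the weight of its dual coset matches the stated $d$.

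First I would record the elementary identity that makes the form manageable. Writing a generic pair of codewords as $\mathbf{u}=\omega\mathbf{a}+\overline{\omega}\mathbf{b}$ and $\mathbf{v}=\omega\mathbf{c}+\overline{\omega}\mathbf{d}$ with $\mathbf{a},\mathbf{b},\mathbf{c},\mathbf{d}\in\mathbb{F}_2^n$, a short computation using $\mathrm{Tr}(\omega)=\mathrm{Tr}(\omega^2)=1$ and $\mathrm{Tr}(0)=\mathrm{Tr}(1)=0$ gives $\mathbf{u}\ast\mathbf{v}=\mathbf{a}\cdot\mathbf{d}+\mathbf{b}\cdot\mathbf{c}$, the ordinary symplectic form. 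Applying this with $\mathbf{a},\mathbf{c}\in\CC_1$ and $\mathbf{b},\mathbf{d}\in\CC_2^{\perp}$ and using the hypothesis $\CC_1\subseteq\CC_2=(\CC_2^{\perp})^{\perp}$, both inner products $\mathbf{a}\cdot\mathbf{d}$ and $\mathbf{b}\cdot\mathbf{c}$ vanish, so $\CC\subseteq\CC^{\perp_H}$. For the size, $\{\omega,\overline{\omega}\}$ is an $\mathbb{F}_2$-basis of $\mathbb{F}_4$, so the sum $\omega\CC_1+\overline{\omega}\CC_2^{\perp}$ is direct and $|\CC|=2^{k_1}\cdot2^{\,n-k_2}=2^{\,n-(k_2-k_1)}$, giving $K=k_2-k_1$ encoded qubits.

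Next I would compute the dual explicitly. Using the identity above, $\mathbf{w}=\omega\mathbf{x}+\overline{\omega}\mathbf{y}\in\CC^{\perp_H}$ iff $\mathbf{x}\cdot\mathbf{b}+\mathbf{y}\cdot\mathbf{a}=0$ for all $\mathbf{a}\in\CC_1,\mathbf{b}\in\CC_2^{\perp}$; specializing $\mathbf{b}=\mathbf{0}$ and then $\mathbf{a}=\mathbf{0}$ forces $\mathbf{y}\in\CC_1^{\perp}$ and $\mathbf{x}\in(\CC_2^{\perp})^{\perp}=\CC_2$, and these conditions are clearly also sufficient. Hence $\CC^{\perp_H}=\omega\CC_2+\overline{\omega}\CC_1^{\perp}$. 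Since the Hamming weight of $\omega\mathbf{x}+\overline{\omega}\mathbf{y}$ over $\mathbb{F}_4$ is $|\mathrm{supp}(\mathbf{x})\cup\mathrm{supp}(\mathbf{y})|\ge\max(\mathrm{wt}(\mathbf{x}),\mathrm{wt}(\mathbf{y}))$, every element of $\CC^{\perp_H}\setminus\CC$ has either $\mathbf{x}\in\CC_2\setminus\CC_1$ or $\mathbf{y}\in\CC_1^{\perp}\setminus\CC_2^{\perp}$, so its weight is at least $\min\{\mathrm{dist}(\CC_2\setminus\CC_1),\mathrm{dist}(\CC_1^{\perp}\setminus\CC_2^{\perp})\}$. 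The matching upper bound is achieved by the pure vectors $\omega\mathbf{x}_0$ (with $\mathbf{x}_0\in\CC_2\setminus\CC_1$ of minimal weight) and $\overline{\omega}\mathbf{y}_0$ (with $\mathbf{y}_0\in\CC_1^{\perp}\setminus\CC_2^{\perp}$ of minimal weight), both of which lie in $\CC^{\perp_H}\setminus\CC$; this yields equality and the stated value of $d$.

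The routine parts are the trace identity and the dual computation; the step that genuinely deserves care is the distance claim, for two reasons. First, I must verify that the excluded coset $\CC^{\perp_H}\setminus\CC$ is exactly parametrized by the failure of the two membership conditions, so that the case split into $\CC_2\setminus\CC_1$ and $\CC_1^{\perp}\setminus\CC_2^{\perp}$ is exhaustive and the achieving vectors indeed escape $\CC$. Second, the identification of the quantum distance with the minimum weight of $\CC^{\perp_H}\setminus\CC$ is the pure-distance convention of the stabilizer formalism; I would state explicitly that this is the distance guaranteed by the construction (a degenerate code may correct more), which is precisely the content of the cited background theorem. Everything else follows from the $\mathbb{F}_4$ self-orthogonality established in the second paragraph.
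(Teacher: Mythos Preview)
The paper does not actually prove this statement: it is quoted verbatim as Theorem~9 of Calderbank--Rains--Shor--Sloane and used as a black box to derive Corollary~\ref{quantum_cor}. So there is no ``paper's own proof'' to compare against.

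Your argument is correct and is essentially the standard proof from the original reference. The trace-Hermitian computation $\mathbf{u}\ast\mathbf{v}=\mathbf{a}\cdot\mathbf{d}+\mathbf{b}\cdot\mathbf{c}$, the identification $\CC^{\perp_H}=\omega\CC_2+\overline{\omega}\CC_1^{\perp}$, and the two-sided bound on the coset weight are all right, and the achieving vectors $\omega\mathbf{x}_0$, $\overline{\omega}\mathbf{y}_0$ do lie in $\CC^{\perp_H}\setminus\CC$ as you claim. One small remark: your closing caveat about ``pure-distance convention'' and degenerate codes is slightly off. The minimum distance of a stabilizer code is \emph{defined} as the minimum weight of $\CC^{\perp_H}\setminus\CC$, full stop; degeneracy concerns whether $\CC$ itself contains nonzero vectors of weight less than $d$, not whether the formula for $d$ might underreport. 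So you can drop that hedge and assert equality outright.
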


If we focus on binary self-orthogonal codes, we have the following corollary.
\begin{corollary}\label{quantum_cor}
	Let $\CC_1$ be a binary self-orthogonal $[n,k_1]$ code. If $d(\CC_1^\perp) \ge d(\CC_1)$, then by taking $\CC= \omega \CC_1 + \bar{\omega}\CC_1$, we obtain an $[[n, n-2k_1,d]]$ code, where $d=d(\CC_1^\perp).$
\end{corollary}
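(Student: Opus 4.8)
The plan is to read the corollary off from Theorem~\ref{quantum} by taking $\CC_2 := \CC_1^\perp$. First I would check the substitution is legitimate: the only hypothesis of Theorem~\ref{quantum} is the inclusion $\CC_1 \subseteq \CC_2$, and with $\CC_2 = \CC_1^\perp$ this is exactly the assumption that $\CC_1$ is self-orthogonal. So Theorem~\ref{quantum} applies with this choice, and the resulting code $\omega\CC_1 + \bar\omega\CC_2^\perp$ is (the specialization of) the code $\CC$ in the statement.

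Next I would read off the parameters. The outer code has dimension $k_2 = \dim\CC_1^\perp = n - k_1$, so the number of logical qubits is $k_2 - k_1 = (n-k_1) - k_1 = n - 2k_1$ and the block length is $n$. For the distance, Theorem~\ref{quantum} gives $d = \min\{\mathrm{dist}(\CC_2\setminus\CC_1),\, \mathrm{dist}(\CC_1^\perp\setminus\CC_2^\perp)\}$. Here $\CC_2 = \CC_1^\perp$ and, by the double-dual identity, $\CC_2^\perp = (\CC_1^\perp)^\perp = \CC_1$, so both sets in this minimum equal $\CC_1^\perp\setminus\CC_1$ and the minimum collapses to $d = \mathrm{dist}(\CC_1^\perp\setminus\CC_1)$.

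It remains to identify $\mathrm{dist}(\CC_1^\perp\setminus\CC_1)$ with $d(\CC_1^\perp)$. One inequality is immediate: each vector of $\CC_1^\perp\setminus\CC_1$ is a nonzero codeword of $\CC_1^\perp$ (since $\0\in\CC_1$), hence has weight at least $d(\CC_1^\perp)$, giving $d \ge d(\CC_1^\perp)$. For the reverse I would bring in the hypothesis: because $\CC_1 \subseteq \CC_1^\perp$ one always has $d(\CC_1^\perp) \le d(\CC_1)$, so the assumption $d(\CC_1^\perp) \ge d(\CC_1)$ forces $d(\CC_1^\perp) = d(\CC_1)$; it then suffices to exhibit a weight-$d(\CC_1^\perp)$ codeword of $\CC_1^\perp$ lying outside $\CC_1$, equivalently to argue that the minimum-weight codewords of $\CC_1^\perp$ are not all contained in the subcode $\CC_1$.

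I expect this last point to be the main --- essentially the only --- obstacle: the rest is bookkeeping with dimensions and the identity $(\CC_1^\perp)^\perp = \CC_1$, whereas here one has to control the coset structure of $\CC_1$ inside $\CC_1^\perp$. (Under the usual ``guaranteed distance'' reading of an $[[n,k,d]]$ code, the inequality $d \ge d(\CC_1^\perp)$ already suffices for the claim; the role of the hypothesis $d(\CC_1^\perp) \ge d(\CC_1)$ is then simply to ensure this guaranteed distance is at least $d(\CC_1)$, so that the quantum construction does not cost us any distance relative to $\CC_1$.)
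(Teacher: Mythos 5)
Your proposal follows essentially the same route as the paper's proof: set $\CC_2=\CC_1^\perp$ in Theorem~\ref{quantum} (legitimate precisely because self-orthogonality gives $\CC_1\subseteq\CC_1^\perp$), compute $k_2-k_1=n-2k_1$, collapse the minimum to $\mathrm{dist}(\CC_1^\perp\setminus\CC_1)$ via $(\CC_1^\perp)^\perp=\CC_1$, and identify this quantity with $d(\CC_1^\perp)$. The one step you flag as unresolved --- ruling out that every minimum-weight codeword of $\CC_1^\perp$ lies in the subcode $\CC_1$ --- is in fact also glossed over in the paper's own proof, which only really establishes the inequality $\mathrm{dist}(\CC_1^\perp\setminus\CC_1)\ge d(\CC_1^\perp)$; so your ``guaranteed distance at least $d(\CC_1^\perp)$'' reading is the honest form of the statement, and you have lost nothing relative to the paper.
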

\begin{proof}
	Since $\CC_1$ is self-orthogonal, $\CC_1 \subseteq \CC_1^\perp$. Thus, letting $\CC_2 =  \CC_1^\perp$ in Theorem \ref{quantum}, we have $\CC_2^\perp=\CC_1$ and $k_2=n-k_1$. Therefore, $k_2-k_1=n-k_1$, and $\CC_2\backslash \CC_1=\CC_1^\perp\backslash \CC_1=\CC_1^\perp\backslash\CC_2^\perp $, thus  $d=dist(\CC_1^\perp\backslash \CC_1).$ By the assumption that $d(\CC_1^\perp) \ge d(\CC_1)$, $\CC_1^\perp$ has no non-zero codeword having weight smaller than $d(\CC_1)$. Thus,  $dist(\CC_1^\perp\backslash \CC_1)=d(\CC_1^\perp),$ and the corollary follows.
\end{proof}

\begin{example}\label{quantumcodeex}
	Using Algorithm \ref{alg1} with an optimal non-SO $[15,5,7]$ code, we construct a Reed-Muller $[16, 5, 8]$ SO code with generator matrix
	$$\left( \begin{smatrix}{cccccccccccccccc}
	1& 0& 0& 1 &0& 1 &1 &0& 0& 1 &1 &0 &1& 0& 0& 1\\
	0& 1& 0& 1 &0& 1 &0 &1& 0 &1 &0& 1& 0& 1& 0& 1\\
	0& 0&1 &1 &0 &0 &1 &1 &0 &0 &1 &1 &0 &0 &1 &1\\
	0& 0& 0& 0& 1& 1 &1 &1 &0& 0 &0 &0 &1& 1 &1 &1\\
	0& 0& 0& 0& 0& 0& 0& 0& 1& 1& 1 &1& 1 &1& 1 &1\\
	\end{smatrix}\right),$$
	whose dual code is a $[16,11,4]$ code. Thus, by Corollary \ref{quantum_cor}, we obtain an optimal $[[16,6,4]]$ quantum code by \cite{codetable}. In the same manner, we obtain $[[11,3,3]]$ and $[[15,7,3]]$ quantum codes which are optimal by \cite{codetable}.

\end{example}

\section{Concluding Remarks}
We have made three major contributions. First, we obtain a new method for checking self-orthogonality using the self-orthogonality matrix $SO_k$ and a vector $\ell(G)$. 
Second, we solve the problem of finding additional columns needed to make the shortest SO embedding code from a given binary $[n, k]$ code for any $k\ge 2$. Finally, we give the shortest SO embedding algorithm for the construction of optimal self-orthogonal codes. Using this algorithm, we succeed in obtaining many new optimal self-orthogonal codes of dimensions 5 and 6 for $n \le 256$.

As future work, it will be interesting to find new optimal SO codes with $n \ge 30$ and dimension $k \ge 7$.

%

\end{document}